\newcommand{\blind}{0}
\newtheorem{rem}{{Remark}}
\newtheorem{prop}{{Proposition}}
\newtheorem{theorem}{{Theorem}}
\begin{document}

\def\spacingset#1{\renewcommand{\baselinestretch}%
{#1}\small\normalsize} \spacingset{1}


\if0\blind
{
  \title{\bf On a quantile autoregressive conditional duration model applied to high-frequency financial data}
  \author{  Helton Saulo\hspace{.2cm}\\
    \small Department of Statistics, University of Brasilia,  Brasilia, Brazil 70910-900\\
     \small Department of Mathematics and Statistics, McMaster University, Hamilton, Ontario, Canada L8S 4K1
    \\ \\
    Narayanaswamy Balakrishnan \\
   \small  Department of Mathematics and Statistics, McMaster University,  Hamilton, Ontario, Canada L8S 4K1
    \\
    and 
    \\
    Roberto Vila\\
    \small Department of Statistics, University of Brasilia,  Brasilia, Brazil 70910-900
    }
  \maketitle
} \fi

\if1\blind
{
  \bigskip
  \bigskip
  \bigskip
  \begin{center}
    {\Large\bf On a quantile autoregressive conditional duration model applied to high-frequency financial data}
\end{center}
  \medskip
} \fi

\bigskip
\vspace{-0.5cm}
\begin{abstract}
 Autoregressive conditional duration (ACD) models are primarily used to deal with data arising from times between two successive events. These models are usually specified in terms of a time-varying conditional mean or median duration. In this paper, we relax this assumption and consider a conditional quantile approach to facilitate the modeling of different percentiles. The proposed ACD quantile model is based on a skewed version of Birnbaum-Saunders distribution, which provides better fitting of the tails than the traditional Birnbaum-Saunders distribution, in addition to advancing the implementation of an expectation conditional maximization (ECM) algorithm. A Monte Carlo simulation study is performed to assess the behavior of the model as well as the parameter estimation method and to evaluate a form of residual. A real financial transaction data set is finally analyzed to illustrate the proposed approach.
\end{abstract}

\noindent%
{\it Keywords:} Skewed Birnbaum-Saunders distribution; Conditional quantile; ECM algorithm; Monte Carlo simulation; Financial transaction data.
\vfill

\newpage
\spacingset{1.45} 

\section{Introduction}\label{sec:01}

Transaction-level high-frequency financial data modeling has become increasingly important in the era of big data. In particular, duration data between successive events (trades, price changes, etc.) were initially modeled by the family of autoregressive conditional duration (ACD) models, proposed by \cite{er:98}. Since the inception 
of Engle and Russell's ACD model, several works have been published, supplying the literature with generalizations of the original model, associated inferential results, and diverse applications; see the literature review by \cite{pa:08} and its 
extension by \cite{bhogalvariyam:19}.

One of the most prominent ACD models in terms of fit and forecasting performance is the Birnbaum-Saunders distribution-based ACD (BS-ACD) model. This model was proposed by \cite{b:10} and as emphasized by \cite{bhogalvariyam:19}, it is constructed by taking into consideration the concept of conditional quantile estimation; the BS-ACD model is constructed in terms of a time-varying median duration; see also the recent discussion paper by \cite{balakundu:19}. Recently, \cite{saulolla:19} made a study and have shown that the BS-ACD model outperforms many other existing models in terms of model fitting and forecasting ability.

In this work, we propose a new ACD model by modeling the conditional quantile duration rather than the traditionally employed conditional mean (or median) duration. The quantile approach allows us to capture the influences of conditioning variables on the characteristics of the response distribution; see \cite{koenkexiao:06}. The proposed model is based on a skewed version of the BS (skew-BS) distribution, which has the main advantage over the classic BS distribution in its capability to fit data that are highly concentrated on the left-tail of the distribution, such is the case for transaction-level high-frequency financial data; the skew-BS distribution was proposed by \cite{vl:06} and inferential results for this model have been discussed by \cite{vslb:11}. We first introduce a reparameterization of the skew-BS model by inserting a quantile parameter, similarly to the work of \cite{sanchezetal:19} who introduced a quantile parameter in the BS distribution, and then develop the new ACD model, denoted by skew-QBS-ACD. 
We then demonstrate that the proposed skew-QBS-ACD model outperforms the BS-ACD model in terms of model fitting and forecasting ability, making it a practical and useful model.

The rest of this paper proceeds as follows. In Section \ref{sec:02}, we describe the usual skew-BS distribution and propose a reparameterization of this distribution in terms of a quantile parameter. In Section \ref{sec:03}, we introduce the skew-QBS-ACD model. In this section, we also describe the expectation conditional maximization (ECM) algorithm for the maximum likelihood (ML) estimation of the model parameters. In Section~\ref{sec:04}, we carry out a Monte Carlo simulation study to evaluate the performance of the estimators. In Section \ref{sec:05}, we apply the skew-QBS-ACD model to a real price duration data set, and finally in Section \ref{sec:06}, we provide some concluding remarks.

\section{Preliminaries}\label{sec:02}

In this section, we describe briefly the skew-BS distribution. We then introduce a quantile-based reparameterization of this distribution, which will be useful subsequently for developing the new ACD model. We derive some properties of the quantile-based skew-BS distribution; see Appendix A.

\subsection{The classical skew-BS distribution}
A random variable $Y$ follows a skew-BS distribution with shape parameter $\alpha>0$, scale parameter $\beta>0$ and skewness parameter $\lambda$, denoted by $Y\sim\textrm{skew-BS}(\alpha,\beta,\lambda)$, if its probability density function (PDF) is given by
\begin{equation}\label{eq:skewbs-pdf}
f_{Y}(y;\alpha,\beta,\lambda)
=2\phi\big[a(y;\alpha,\beta)\big]\Phi\big[\lambda a(y;\alpha,\beta)\big] a'(y;\alpha,\beta), \quad y>0,
\end{equation}
where
\begin{align}\label{at}
a(y;\alpha,\beta) 
= 
\frac{1}{\alpha} \left[\sqrt{\frac{y}{\beta}}-\sqrt{\frac{\beta}{y}}\,\right];
\quad 
a'(y;\alpha,\beta)
=
{1\over 2\alpha y}
\left[\sqrt{y\over \beta}+{\sqrt{\beta\over y}}\,\right];
\end{align}
and $\phi(\cdot)$ and $\Phi(\cdot)$ are the standard normal PDF and cumulative distribution function (CDF), respectively. 
\begin{rem}
Note that the \textrm{skew-BS} distribution \eqref{eq:skewbs-pdf} can be approached as a weighted distribution
\begin{align*}
f_{Y}(y;\alpha,\beta,\lambda)
={w(y) f_{T}(y;\alpha,\beta)\over \mathbb{E}[w(T)]}, \quad y>0,
\end{align*}
where $w(y)=\Phi\big[\lambda a(y;\alpha,\beta)\big]$ and $T$ follows a classic BS distribution, $T\sim\textrm{BS}(\alpha,\beta)$; see \cite{bs:69a} for details on the classic BS distribution.
\end{rem}

Figure \ref{figpdf:1} displays different shapes of the skew-BS PDF for different combinations of parameters. The plot show the clear effect the skewness parameter $\lambda$ has on the density function.

\begin{figure}[h!]
\vspace{-0.25cm}
\centering
{\includegraphics[height=6.2cm,width=6.2cm]{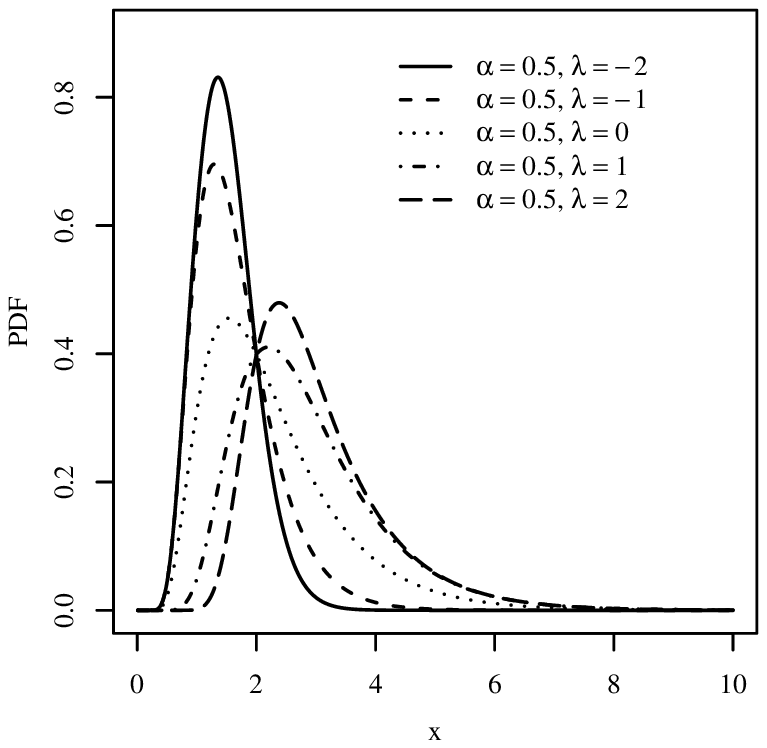}}
{\includegraphics[height=6.2cm,width=6.2cm]{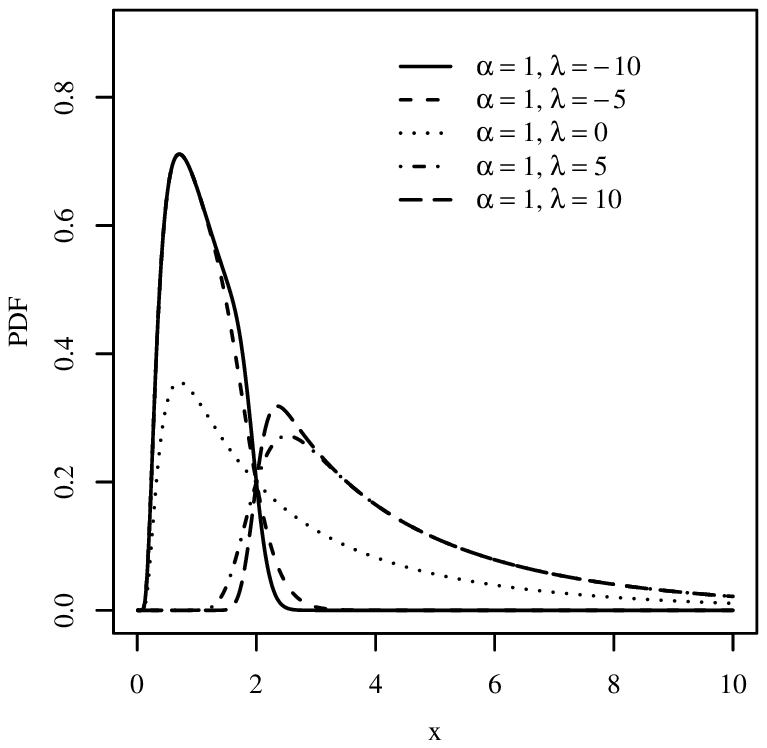}}\\
\vspace{-0.2cm}
\caption{Skew-QBS PDFs for some parameter values ($\beta=2.00$).}\label{figpdf:1}
\end{figure}

The random variable $Y$ can be stochastically represented as
\begin{equation}\label{eq:sr}
 Y=\frac{\beta}{4}\left[ \alpha X + 
 \sqrt{ (\alpha X)^2 +4   }    \right]^2,
 \quad 
 X=\delta U + Z \sqrt{1-\delta^2},
\end{equation}
where $\delta=\lambda/\sqrt{1+\lambda^2}$, $X$ is a random variable following a skew-normal (SN) distribution, $X\sim\textrm{SN}(\lambda)$, $Z$ is a standard normal random variable, $Z\sim\textrm{N}(0,1)$, and $U$ is a standard half-normal (HN) random variable,  $U\sim \textrm{HN}(0,1)$; see \cite{vslb:11}. The $100q$-th quantile of $Y\sim\textrm{skew-BS}(\alpha,\beta,\lambda)$ is given by
\begin{equation}\label{eq:qf}
 \Xi_{q}=Q_{Y}(q;\alpha,\beta,\lambda)
 =
 \frac{\beta}{4}\left\{ \alpha Q_X(q;\lambda)  + 
 \sqrt{[\alpha Q_X(q;\lambda)  ]^2 +4   }    \right\}^2.
\end{equation}
Interested readers may refer to the book by \cite{azzalinicap:14} for elaborate details on skew-normal distribution and related issues.

\subsection{The quantile-based skew-QBS distribution}

Consider a fixed number $q \in (0,1)$ and the following reparameterization $(\alpha, \beta,\lambda) \mapsto (\alpha, \Xi_{q},\lambda)$, where $\Xi_{q}$ is the $100q$-th quantile of $Y\sim\textrm{skew-BS}(\alpha,\beta,\lambda)$, as in \eqref{eq:qf}. Then, the PDF of $Y$ based on $\Xi_{q}$ is given by
\begin{align}\label{eq:skewbs-pdf-rep}
f_{Y}(y;\alpha,\Xi_{q},\lambda)
&=
2\phi\big[a(y;\alpha,\Xi_{q})\big]\Phi\big[\lambda a(y;\alpha,\Xi_{q}) \big] a'(y;\alpha,\Xi_{q}),
\quad y>0,
\end{align}
where
{
\color{black}
\begin{equation}\label{at-rep}
a(y;\alpha,\Xi_{q}) = \frac{1}{\alpha} \left[\sqrt{\frac{y \eta_{\alpha; \lambda}^2}{4\Xi_{q}}}-\sqrt{\frac{4\Xi_{q}}{y\eta_{\alpha; \lambda}^2}}\right];
\quad 
a'(y;\alpha,\Xi_{q})
=
{1\over 2\alpha y}
\left[\sqrt{\frac{y \eta_{\alpha; \lambda}^2}{4\Xi_{q}}}+\sqrt{\frac{4\Xi_{q}}{y\eta_{\alpha; \lambda}^2}}\right];
\end{equation}
with $\eta_{\alpha; \lambda}= \alpha Q_X(q;\lambda)  + 
\sqrt{[\alpha Q_X(q;\lambda) ]^2 +4   } $. 
}
Let us denote $Y\sim\textrm{skew-QBS}(\alpha,\Xi_{q},\lambda)$. Figure \ref{figpdf:2} illustrates some shapes provided by different values of $\Xi_{q}$ and $\lambda$ for the skew-QBS PDF. 

\begin{figure}[h!]
\vspace{-0.25cm}
\centering
{\includegraphics[height=6.2cm,width=6.2cm]{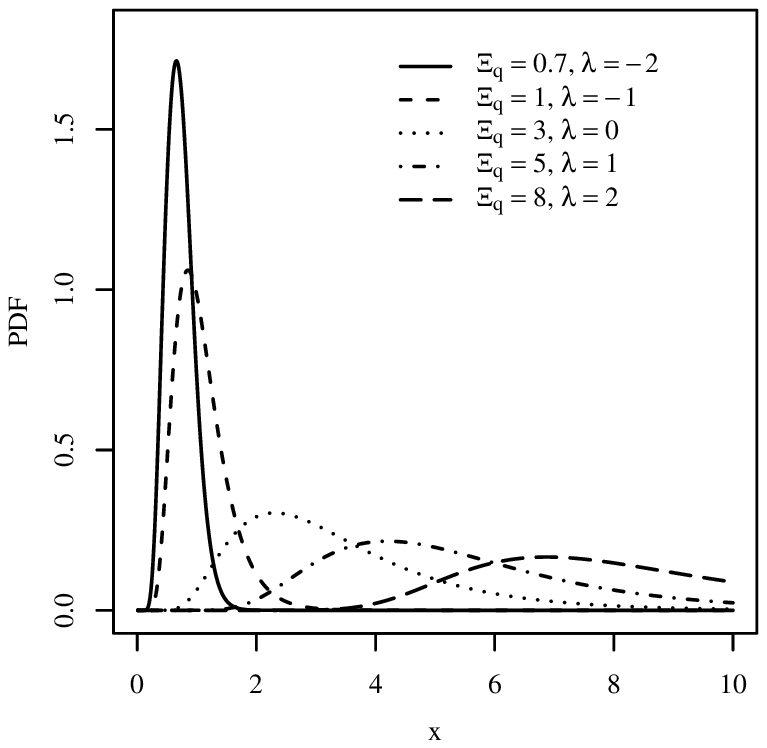}}
{\includegraphics[height=6.2cm,width=6.2cm]{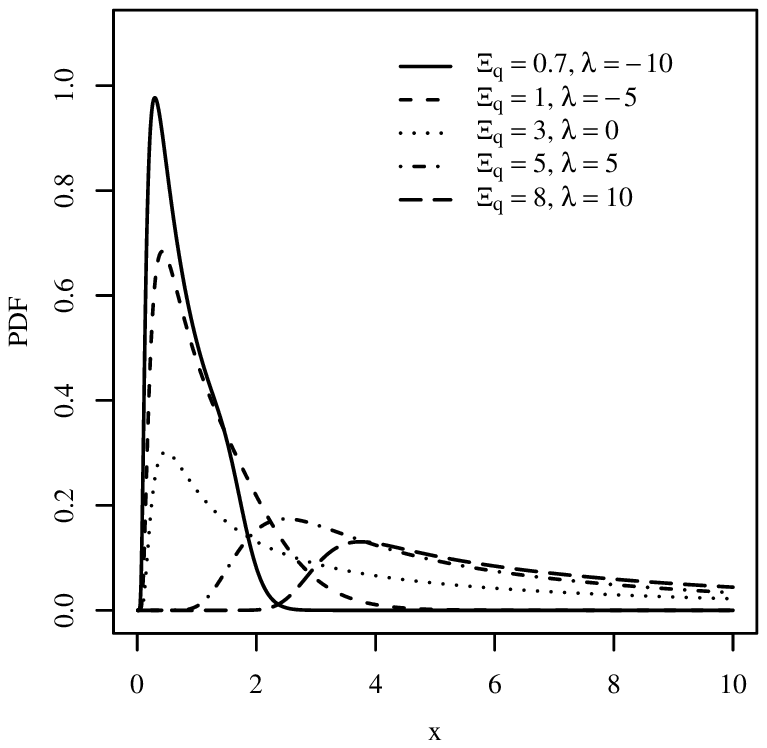}}\\
\vspace{-0.2cm}
\caption{Skew-QBS PDFs for some parameter values with $\alpha=0.50$ (left) and $\alpha=1.50$ (right) ($p=0.50$).}\label{figpdf:2}
\end{figure}

Note that, as in \cite{vslb:11}, the extended BS (EBS) distribution is obtained from the PDF of $Y|(U=u)$  
with $Y\sim\textrm{skew-QBS}(\alpha,\Xi_{q},\lambda)$ and $U\sim \textrm{HN}(0,1)$, which is given by
\begin{equation}\label{eq:skewbs-pdf-rep-ebs}
f_{Y|U}(y|u;\alpha_{\delta},\Xi_{q},\lambda_{h})
=\phi\big[\lambda_h + a(y;\alpha_{\delta},\Xi_{q})\big] a'(y;\alpha_{\delta},\Xi_{q}), \quad y>0,
\end{equation}
where $\alpha_{\delta}=\alpha\sqrt{1-\delta^2}$, $\lambda_{h}=-\delta h /\sqrt{1-\delta^2}$, and $a$ and $a'$ are as presented in \eqref{at} with $\alpha$ and $\lambda$ replaced by 
$\alpha_{\delta}$ and $\lambda_{h}$, respectively. We use the notation $Y\sim{\rm EBS}(\alpha_{\delta},\Xi_{q},\lambda_{h})$ in this case. Moreover, it is possible to obtain the conditional PDF of $U$, given $Y=t$, that is, the HN standard PDF: 
\begin{equation}\label{eq:skewbs-pdf-rep-uy}
f_{U|Y}(u|y;\alpha,\Xi_{q},\lambda,\delta)=
\frac{\phi\big[u;\mu=\delta a(y;\alpha,\Xi_{q}),\sigma^2=1-\delta^2 \big] }
{\Phi\big[ \lambda  a(y;\alpha,\Xi_{q}) \big]}, \quad u\geq 0,
\end{equation}
where $a(y;\alpha,\Xi_{q})$ is as given in \eqref{at}.

\section{The skew-QBS ACD model and ECM estimation}\label{sec:03}

\subsection{The skew-QBS ACD model}
Let $T_1,\ldots, T_n$ be a sequence of successive recorded times at which market events (trade, price change, etc.) occur. Then, define $Y_t=T_t-T_{t-1}$, for $i=1,\ldots,n$, that is, $Y_t$ is the time elapsed between two successive market events. We now propose a skew-QBS ACD model specified in terms of a time varying conditional quantile, $\Xi_{q,t} = F_{Y}^{-1}(q;\alpha,\Xi_{q,t},\lambda|\mathbb{F}_{t-1})$ say, where 
$F_{Y_t}^{-1}$ is the quantile function of the skew-QBS distribution and $\mathbb{F}_{t-1}$ is the set of all data available until time $T_{t-1}$. The conditional PDF of the skew-QBS ACD is given by 
\begin{equation}\label{eq:skewbs-pdf-acd-rep}
f_{Y|\Xi_{q}}(y_{t};\alpha,\Xi_{q,t},\lambda)
=
2\phi\big[a(y_t;\alpha,\Xi_{q,t},\lambda)\big]\Phi\big[\lambda a(y_t;\alpha,\Xi_{q,t},\lambda)\big] a'(y_t;\alpha,\Xi_{q,t},\lambda), \quad y_t>0,
\end{equation}
where 
\begin{align}\label{at-rep-1}
& a(y;\alpha,\Xi_{q,t},\lambda) = \frac{1}{\alpha} \left[\sqrt{\frac{y \eta_{\alpha; \lambda}^2}{4\Xi_{q,t}}}-\sqrt{\frac{4\Xi_{q,t}}{y\eta_{\alpha; \lambda}^2}}\right];
\\[0,2cm]
& a'(y;\alpha,\Xi_{q,t},\lambda) \label{at-rep-2}
=
{1\over 2\alpha y}
\left[\sqrt{\frac{y \eta_{\alpha; \lambda}^2}{4\Xi_{q,t}}}+\sqrt{\frac{4\Xi_{q,t}}{y\eta_{\alpha; \lambda}^2}}\right];
\end{align}
and
\begin{align}\label{definition-eta}
\eta_{\alpha; \lambda}= \alpha Q_X(q;\lambda)  + 
\sqrt{[\alpha Q_X(q;\lambda) ]^2 +4   }.
\end{align}
Here $Q_X(q;\lambda)$ denotes the $100q$-th quantile of $X\sim\textrm{SN}(\lambda)$.
Furthermore,
the dynamics of the conditional quantile are governed by
\begin{equation}\label{eq:skewbs-dyn-acd-rep}
 \log(\Xi_{q,t}) = \varpi + \sum_{j=1}^{r}\rho_{j} \log(\Xi_{q,t-j}) + \sum_{j=1}^{s}\frac{\sigma_{j}\,y_{t-j}}{\Xi_{q,t-j}}, \quad t=1,\ldots, n,
\end{equation}
implying the notation skew-QBS-ACD($r, s, q$). The ACD model in \eqref{eq:skewbs-pdf-rep} included as a special case the BS-ACD($r, s$) model proposed by \cite{b:10} when $q=0.5$ and $\lambda=0$. 
Note that the model in \eqref{eq:skewbs-pdf-acd-rep} can be represented as
\begin{equation}\label{eq:skewbs-pdf-acd-rep-formu}
Y_{t}=\Xi_{q,t}\,\varrho_{t}, \quad t = 1, \ldots, n,
\end{equation}
where $\varrho_{t}$ are independent and identically distributed (iid) random variables following the skew-QBS distribution, that is, $\varrho_{t} \stackrel{\text{iid}}{\sim}\textrm{skew-QBS}(\alpha,\Xi_{q}=1,\lambda)$ (see Proposition \ref{prop:proper}), and then the $Y_t$s are independent (ind) and non-identically distributed, i.e., $Y_t\stackrel{\text{ind}}{\sim} \textrm{skew-QBS}(\alpha,\Xi_{q,t},\lambda)$. 

In what follows we establish some results on moments, dispersion index function and autocorrelation function of the skew-QBS ACD model.
To this end, let us define the following:
\[
\boldsymbol{\Omega}=
\begin{pmatrix}
\rho_1 & \rho_2 & \ldots & \rho_{p-1} & \rho_p \\
1 & 0 & \ldots & 0 & 0 \\
0 & 1 & \ldots & 0 & 0 \\
0 & 0 & \ldots & 1 & 0
\end{pmatrix},
\quad 
\phi_k=\boldsymbol{\rho}^\top \boldsymbol{\Omega}^{k-p-1} \boldsymbol{\phi}, \
k>p=\max\{r,s\},
\]
where $\boldsymbol{\rho}=(\rho_1,\ldots,\rho_p)^\top$ and 
$\boldsymbol{\phi}=(\phi_1,\ldots,\phi_p)$  such that
\[
\phi_0=1, \quad \phi_1=\rho_1, \quad 
\phi_l=
\begin{cases}
\sum_{j=1}^l \rho_j \phi_{l-j}, & l=2,\ldots,p,
\\
\sum_{j=1}^p \rho_j \phi_{l-j}, & l>p.
\end{cases}
\]
Using the above notation, the next result shows that
the moments of the skew-QBS ACD model depend 
on the moments of $\varrho_{t}$.
\begin{prop}\label{prop-first}
Assume that $\mathbb{E}[{\rm e}^{m\theta_j \varrho_{t}} ]$ and
$\mu_m=\mathbb{E}[Y_t^m]$ exist for each $m>0$. For the  skew-QBS ACD model,
$\lambda_{\rm max}<1$ iff $\mu_m$ exists, where $\lambda_{\rm max}$ is the
absolute value of the maximum eigenvalue of the matrix $\boldsymbol{\Omega}$. 
Furthermore,
\begin{enumerate}
  \item \text{Moments of the skew-QBS ACD model:}
\[
\mathbb{E}[Y_t^m]
=
\mu_m
{\rm e}^{m\varpi(1-\sum_{j=1}^p\rho_j)^{-1}}
\,
\prod_{j=1}^\infty \mathbb{E}[{\rm e}^{m\theta_j \varrho_{t}}],
\]
where
$
\theta_1=\sigma_1 \ \text{and} \
\theta_l=
\begin{cases}
\sum_{j=1}^l \sigma_j \phi_{l-j}, & l=2,\ldots,p,
\\
\sum_{j=1}^p \sigma_j \phi_{l-j}, & l>p.
\end{cases}
$
  \item \text{Dispersion index function:}
\[
1+\delta_Y
=
(1+\delta^2) 
{
\prod_{j=1}^\infty \mathbb{E}[ {\rm e}^{2\theta_j  \varrho_t } ]
\over 
\big\{
\prod_{j=1}^\infty \mathbb{E}[{\rm e}^{\theta_j  \varrho_t }]
\big\}^2
}
\geqslant 
1+\delta^2,
\]
where $\delta=\sqrt{\rm Var(\varrho_{t})}/\mathbb{E}[\varrho_{t}]$ is the dispersion index of $\varrho_{t}$ and $\delta_Y$ denotes the degree of dispersion of the random variable $Y$ by the coefficient of variation.
\end{enumerate}
\end{prop}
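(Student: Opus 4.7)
The plan is to exploit the stochastic representation $Y_t = \Xi_{q,t}\varrho_t$ from \eqref{eq:skewbs-pdf-acd-rep-formu} together with the identity $y_{t-j}/\Xi_{q,t-j} = \varrho_{t-j}$, so that the dynamic equation \eqref{eq:skewbs-dyn-acd-rep} becomes a linear stochastic recurrence on the log scale,
\[
\log \Xi_{q,t} = \varpi + \sum_{j=1}^{r}\rho_j \log \Xi_{q,t-j} + \sum_{j=1}^{s} \sigma_j\, \varrho_{t-j}.
\]
This is an ARMA-type recursion in $\log \Xi_{q,t}$ driven by the iid sequence $\{\varrho_{t-j}\}$. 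The companion matrix $\boldsymbol{\Omega}$ encodes its autoregressive operator, and under the stability condition $\lambda_{\max}<1$ one may iterate backwards and invert the autoregressive polynomial to obtain the MA$(\infty)$ representation
\[
\log \Xi_{q,t} = \frac{\varpi}{1-\sum_{j=1}^p \rho_j} + \sum_{j=1}^{\infty} \theta_j\, \varrho_{t-j},
\]
where matching coefficients after substitution gives precisely the two-case recursion defining $\theta_l$ in the statement; the switch at $l=p$ reflects that the moving-average coefficients $\sigma_j$ vanish for $j>s$ once $l$ exceeds $p=\max\{r,s\}$.

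Next, I would exploit the measurability/independence structure of the model: $\Xi_{q,t}$ is $\mathbb{F}_{t-1}$-measurable, while $\varrho_t$ is independent of $\mathbb{F}_{t-1}$, so
\[
\mathbb{E}[Y_t^m] = \mathbb{E}[\varrho_t^m]\,\mathbb{E}[\Xi_{q,t}^m] = \mathbb{E}[\varrho_t^m]\,\mathbb{E}\bigl[\exp(m\log \Xi_{q,t})\bigr].
\]
Substituting the MA$(\infty)$ representation and invoking the iid property of the $\{\varrho_{t-j}\}$'s converts the right-hand expectation into an infinite product,
\[
\mathbb{E}\bigl[\exp(m\log \Xi_{q,t})\bigr] = \exp\!\Bigl(\tfrac{m\varpi}{1-\sum_{j=1}^p \rho_j}\Bigr)\,\prod_{j=1}^\infty \mathbb{E}\bigl[{\rm e}^{m\theta_j \varrho_t}\bigr],
\]
which, after identifying $\mu_m$ with $\mathbb{E}[\varrho_t^m]$, yields part (a). The assumed existence of each $\mathbb{E}[{\rm e}^{m\theta_j \varrho_t}]$, together with the geometric decay of $\theta_j$ implied by $\lambda_{\max}<1$, guarantees that the infinite product converges.

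For part (b), specializing to $m=1,2$ in the formula above, the constant exponential factor cancels in the ratio and
\[
\frac{\mathbb{E}[Y_t^2]}{(\mathbb{E}[Y_t])^2} = \frac{\mathbb{E}[\varrho_t^2]}{(\mathbb{E}[\varrho_t])^2}\cdot \frac{\prod_{j=1}^\infty \mathbb{E}[{\rm e}^{2\theta_j\varrho_t}]}{\bigl\{\prod_{j=1}^\infty \mathbb{E}[{\rm e}^{\theta_j\varrho_t}]\bigr\}^2}.
\]
The leading factor equals $1+\delta^2$ by definition of $\delta=\sqrt{{\rm Var}(\varrho_t)}/\mathbb{E}[\varrho_t]$. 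For the lower bound, applying Jensen's inequality (equivalently, Cauchy--Schwarz) factorwise to the convex map $x\mapsto x^2$ yields $\mathbb{E}[{\rm e}^{2\theta_j\varrho_t}] = \mathbb{E}\bigl[({\rm e}^{\theta_j\varrho_t})^2\bigr] \geq \bigl(\mathbb{E}[{\rm e}^{\theta_j\varrho_t}]\bigr)^2$ for every $j$, so the infinite-product ratio is at least one, completing the inequality.

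The main obstacle is the stationarity equivalence ``$\lambda_{\max}<1$ iff $\mu_m$ exists.'' The forward implication hinges on showing the MA$(\infty)$ expansion of $\log \Xi_{q,t}$ converges and that the resulting infinite MGF-product is finite; a Jordan-form analysis of $\boldsymbol{\Omega}$ provides $|\theta_j|$ bounded by $\lambda_{\max}^j$ up to a polynomial factor in $j$, which combined with the assumed existence of the MGFs lets one interchange expectation and infinite product by monotone convergence. The converse is more delicate: if $\lambda_{\max}\geq 1$ then the $\theta_j$ fail to decay and one must argue that the product necessarily diverges and the moments blow up. This Jordan-form step and the justification of interchanging limits in the infinite product are the most technical parts of the argument.
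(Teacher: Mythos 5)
Your proposal is correct and follows essentially the same route as the paper: the paper's entire proof is a reduction to Theorem 1 and Corollary 1 of Bauwens et al.\ (2003), obtained by identifying $\psi_t=\log\Xi_{q,t}$, $g(\epsilon_t)=\epsilon_t=y_t/\Xi_{q,t}=\varrho_t$, $\alpha_j=\sigma_j$, $\beta_j=\rho_j$ and $p=\max\{r,s\}$, which is exactly the Log-ACD recursion you write down before unfolding its MA$(\infty)$ representation, the independence factorization, and the Jensen bound. You have simply made explicit the steps that the paper delegates to the cited reference (including the correct reading of $\mu_m$ as $\mathbb{E}[\varrho_t^m]$), so there is nothing substantively different to compare.
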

\begin{proof}
The proof follows the same steps as those of Theorem 1 and Corollary 1 in 
\cite{bauwensetal:03}, by taking 
\[
\psi_t=\log(\Xi_{q,t}), \ \ \
\varpi=\omega,  \ \ \
\alpha_j=\sigma_j, \ \ \
g(\epsilon_t)=\epsilon_t =\varrho_{t}={y_t\over\Xi_{q,t}}, \ \ \
\beta_j=\rho_j \ \ \
\text{and} \ \  \ p=\max\{r,s\}
\] 
in Equation (5) of reference \cite{bauwensetal:03}.  
\end{proof}
\begin{prop}
Assume that $\mu=\mathbb{E}[\varrho_{t}]<\infty$, $\lambda_{\rm max}<1$,
$\mathbb{E}[{{\rm e}^{\delta \varrho_{t}}}]<\infty$ for any $\delta\in\mathbb{R}$, 
$\mathbb{E}[\varrho_{t-n} {\rm e}^{\theta_n \varrho_{t-n}}]<\infty$
for any $n\in\mathbb{N}_+$, and
$\mathbb{E}[{\rm e}^{(\phi_{n-j}\sigma_{j+h}+\theta^*_{hn})
(\varrho_{t-n-h})}]<\infty$ for $j$ and $h$ such that $n\geqslant 1$, where $\lambda_{\rm max}$ is as defined in Proposition \ref{prop-first}.
Then, for $n\geqslant 1$, the $n$-th order autocorrelation of 
$\{Y_t\}$ has the form
\begin{align*}
\rho_n
=
{
\mu \mathbb{E}[\varrho_{t} {\rm e}^{\theta_n \varrho_{t}}]
\prod_{j=1}^{n-1} 
\mathbb{E}[{{\rm e}^{\theta_j \varrho_{t}}}]
\prod_{j=p}^{\infty} 
\mathbb{E}[{{\rm e}^{\theta^*_{jn} \varrho_{t}}}] M_{n,p}
-
\mu^2
(\prod_{j=1}^\infty \mathbb{E}[{{\rm e}^{\theta_{j} \varrho_{t}}}])^2
\over
\mu_2 \prod_{j=1}^\infty \mathbb{E}[{{\rm e}^{2\theta_{j} \varrho_{t}}}]
-
\mu^2 (\prod_{j=1}^\infty \mathbb{E}[{{\rm e}^{\theta_{j} \varrho_{t}}}])^2
},
\end{align*}
where
\begin{align*}
M_{n,p}
=
\begin{cases}
\prod_{h=1}^{p-n}
\mathbb{E}[{\rm e}^{(\sum_{j=1}^n\phi_{n-j}\sigma_{j+h}+\theta^*_{hn})(\varrho_{t-n-h})}]
\prod_{h=1}^{n-1}
\mathbb{E}[{\rm e}^{(\sum_{j=1}^h\phi_{n-j}\sigma_{p-h+j}+\theta^*_{p-h,n})(\varrho_{t-n-h})}],
& n \leqslant p
\\
\prod_{h=1}^{p-1}
\mathbb{E}[{\rm e}^{(\sum_{j=1}^{p-h}\phi_{n-j}\sigma_{h+j}+\theta^*_{hn})(\varrho_{t-n-h})}],
& n> p
\end{cases},
\end{align*}
$
\theta^*_{jn}
=
\begin{cases}
\sum_{j=1}^h \sigma_j \phi^*_{h+1-j,n}, & h=1,\ldots,p
\\
\sum_{j=1}^p \sigma_j \phi^*_{h+1-j,n}, & h>p
\end{cases},
$ 
$\phi^*$ is as in (56) of \cite{bauwensetal:03},
and $\mu_2={\rm Var}(\varrho_{t})+\mu^2$.
\end{prop}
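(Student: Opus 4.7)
The plan is to mirror Proposition~\ref{prop-first} and apply the methodology of \cite{bauwensetal:03} (Theorem~1 and Corollary~2) under the same dictionary $\psi_t=\log(\Xi_{q,t})$, $\varpi=\omega$, $\alpha_j=\sigma_j$, $g(\epsilon_t)=\varrho_t=y_t/\Xi_{q,t}$, $\beta_j=\rho_j$, $p=\max\{r,s\}$. Writing
\begin{align*}
\rho_n = \frac{\mathbb{E}[Y_t Y_{t-n}] - \mathbb{E}[Y_t]^2}{\mathbb{E}[Y_t^2] - \mathbb{E}[Y_t]^2},
\end{align*}
the denominator is immediate from Proposition~\ref{prop-first}: both $\mathbb{E}[Y_t^2]$ and $\mathbb{E}[Y_t]^2$ share a common factor $\exp\{2\varpi/(1-\sum_{j=1}^p\rho_j)\}$ that cancels in the ratio, leaving exactly $\mu_2\prod_j\mathbb{E}[e^{2\theta_j\varrho_t}] - \mu^2(\prod_j\mathbb{E}[e^{\theta_j\varrho_t}])^2$.

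For the numerator I would first invert the log-dynamics \eqref{eq:skewbs-dyn-acd-rep} into its infinite moving-average representation
\begin{align*}
\log(\Xi_{q,t}) = \frac{\varpi}{1-\sum_{j=1}^p\rho_j} + \sum_{j=1}^\infty \theta_j \varrho_{t-j},
\end{align*}
with the $\theta_j$ of Proposition~\ref{prop-first}, so that
\begin{align*}
Y_t Y_{t-n} = e^{2\varpi/(1-\sum_{j=1}^p\rho_j)}\,\varrho_t\,\varrho_{t-n}\exp\!\Bigl(\sum_{j=1}^\infty \theta_j \varrho_{t-j} + \sum_{j=1}^\infty \theta_j \varrho_{t-n-j}\Bigr).
\end{align*}
The next step is to regroup the two series at common innovations: for $j=1,\dots,n-1$ the innovation $\varrho_{t-j}$ appears only in the first sum (with coefficient $\theta_j$); $\varrho_{t-n}$ contributes both itself and $\theta_n\varrho_{t-n}$; and for each $h\geqslant 1$ the innovation $\varrho_{t-n-h}$ carries the combined coefficient $\theta_h+\theta_{n+h}$. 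Re-expanding $\theta_{n+h}$ through the ARMA recursion of Proposition~\ref{prop-first} and using the auxiliary coefficients $\phi^*_{\cdot,n}$ of \cite{bauwensetal:03} (Eq.~(56)) produces precisely the exponents $\sum_{j}\phi_{n-j}\sigma_{j+h}+\theta^*_{hn}$ that appear inside $M_{n,p}$. The dichotomy $n\leqslant p$ versus $n>p$ is simply whether these combined coefficients still involve explicit MA weights $\sigma_j$ (the finite product in $M_{n,p}$) or have already been absorbed into the pure AR tail $\prod_{j\geqslant p}\mathbb{E}[e^{\theta^*_{jn}\varrho_t}]$.

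Taking expectations, the mutual independence of the $\{\varrho_{t-k}\}$ lets the MGF factorize into: a factor $\mathbb{E}[\varrho_t]=\mu$ from the isolated $\varrho_t$; a factor $\mathbb{E}[\varrho_{t-n}e^{\theta_n\varrho_{t-n}}]$ from the lag-$n$ contribution; the finite product $\prod_{j=1}^{n-1}\mathbb{E}[e^{\theta_j\varrho_t}]$ from the short lags; and $M_{n,p}\prod_{j\geqslant p}\mathbb{E}[e^{\theta^*_{jn}\varrho_t}]$ from the deep tail. The integrability assumptions in the statement are exactly what is needed to make each expectation finite and to justify Fubini in the infinite product. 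Substituting into $\rho_n$ and cancelling the common exponential factor delivers the stated formula. The main obstacle is the bookkeeping for $\theta^*_{jn}$ and $M_{n,p}$, but this is a verbatim transcription from the appendix of \cite{bauwensetal:03} once the dictionary above is fixed; nothing specific to the skew-QBS innovation distribution enters the argument beyond its scalar MGF $\mathbb{E}[e^{c\varrho_t}]$ and its first two moments.
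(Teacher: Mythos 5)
Your proposal is correct and follows essentially the same route as the paper: the paper's proof is simply the observation that the result is Theorem~2 of \cite{bauwensetal:03} applied under the dictionary $\psi_t=\log(\Xi_{q,t})$, $\alpha_j=\sigma_j$, $\beta_j=\rho_j$, $g(\epsilon_t)=\varrho_t=y_t/\Xi_{q,t}$, $p=\max\{r,s\}$, which is exactly the reduction you set up (you merely unpack the internal mechanics of that theorem --- the MA($\infty$) representation, regrouping of exponents at common innovations, and factorization by independence --- rather than citing it wholesale). The only cosmetic discrepancy is that you point to Theorem~1 and Corollary~2 where the paper invokes Theorem~2, but the substance is identical.
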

\begin{proof}
The proof follows from Theorem 2 of \cite{bauwensetal:03} by using the same notation as in the proof of Proposition \ref{prop-first}.
\end{proof}

\subsection{Estimation via the ECM algorithm}\label{sec:3-2}

\subsubsection{ML approach}
Let $\bm Y = (Y_{1}, \ldots, Y_{n})^\top$ be a random sample from $Y_{t}\sim \textrm{skew-QBS}(\alpha,\Xi_{q,t}, \lambda)$, for $t=1,\ldots,n$, with PDF as given in
\eqref{eq:skewbs-pdf-rep}, and ${\bm y} = (y_{1},\ldots, y_{n})^{\top}$ be the corresponding observations.
To obtain the ML estimates of the model parameters, $\bm{\theta} = (\alpha, \varpi, \rho_{1},\ldots,\rho_{r}, \sigma_{1}, \ldots, \sigma_{s}, \lambda)^{\top}$ say, we can maximize the log-likelihood function
\begin{equation}\label{eq:log:lik}
\ell(\bm{\theta})=
\sum_{t=1}^{n}
\log\big[
f_{Y|\Xi_{q}}(y_{t};\alpha,\Xi_{q,t},\lambda)
\big],
\end{equation}
where $f_{Y|\Xi_{q}}(y_{t};\alpha,\Xi_{q,t},\lambda)$ and $\Xi_{q,t}$  are as in \eqref{eq:skewbs-pdf-acd-rep} and \eqref{eq:skewbs-dyn-acd-rep}, respectively. As usual, it can be done by 
equating the score vector $\dot{\bm\ell}(\bm{\theta})$ to zero, and by using an iterative procedure for non-linear optimization, such as the Broyden-Fletcher-Goldfarb-Shanno (BFGS) quasi-Newton method. 

Let $f_{T|\Xi_{q}}(y_{t};\alpha,\Xi_{q,t},\lambda)
=
\phi\big[a(y;\alpha,\Xi_{q,t},\lambda)\big]a'(y;\alpha,\Xi_{q,t},\lambda)$ be the conditional PDF of the QBS distribution, where $a(y;\alpha,\Xi_{q,t},\lambda)$ and $a'(y;\alpha,\Xi_{q,t},\lambda)$ are as in \eqref{at-rep-1} and \eqref{at-rep-2}, respectively. Then, by using \eqref{eq:skewbs-pdf-acd-rep},  the conditional PDF of the skew-QBS ACD can be written as
$$
f_{Y|\Xi_{q}}(y_{t};\alpha,\Xi_{q,t},\lambda)
=
2
\Phi\big[\lambda a(y_t;\alpha,\Xi_{q,t},\lambda)\big]
f_{T|\Xi_{q}}(y_{t};\alpha,\Xi_{q,t},\lambda). 
$$
Using the above relation, note that
the elements of the score vector $\dot{\bm\ell}(\bm{\theta})$ are given by
\begin{align*}
{\partial \ell(\bm{\theta})\over \partial\gamma}
=
\sum_{t=1}^{n}
\dfrac{1}{f_{Y|\Xi_{q}}(y_{t};\alpha,\Xi_{q,t},\lambda)} \,
\dfrac{\partial f_{Y|\Xi_{q}}(y_{t};\alpha,\Xi_{q,t},\lambda)}{\partial \gamma},
\end{align*}
for each $\gamma\in\{\alpha, \varpi, \rho_{1},\ldots,\rho_{r}, \sigma_{1}, \ldots, \sigma_{s}, \lambda\}$, where
{	\scalefont{0.85}
\begin{align*}
{\partial f_{Y|\Xi_{q}}(y_{t};\alpha,\Xi_{q,t},\lambda)\over \partial\gamma}
=&
2\lambda
\phi\big[\lambda a(y_t;\alpha,\Xi_{q,t},\lambda)\big]
f_{T|\Xi_{q}}(y_{t};\alpha,\Xi_{q,t},\lambda) \,
{\partial a(y_t;\alpha,\Xi_{q,t},\lambda)\over \partial\gamma}
\\[0,2cm]
&+
2 \Phi\big[\lambda a(y_t;\alpha,\Xi_{q,t},\lambda)\big] \,
{\partial f_{T|\Xi_{q}}(y_{t};\alpha,\Xi_{q,t},\lambda)\over \partial\gamma},
\\[0,2cm]
{\partial f_{T|\Xi_{q}}(y_{t};\alpha,\Xi_{q,t},\lambda)\over \partial\gamma}
=&
\phi\big[a(y_t;\alpha,\Xi_{q,t},\lambda)\big]
\\[0,2cm]
& \times 
\left\{
{\partial a'(y_t;\alpha,\Xi_{q,t},\lambda)\over\partial\gamma}
-
a(y_t;\alpha,\Xi_{q,t},\lambda) a'(y_t;\alpha,\Xi_{q,t},\lambda) \, 
{\partial a(y_t;\alpha,\Xi_{q,t},\lambda)\over\partial\gamma}
\right\},
\end{align*}
}
for each $\gamma\in\{\alpha, \varpi, \rho_{1},\ldots,\rho_{r}, \sigma_{1}, \ldots, \sigma_{s},\lambda\}$. The above first order partial derivatives of functions $a(y_t;\alpha,\Xi_{q,t},\lambda)$ and $a'(y_t;\alpha,\Xi_{q,t},\lambda)$ defined in \eqref{at-rep-1} and \eqref{at-rep-2}, respectively, are written as
\begin{align*}
\textstyle
{\partial a(y_t;\alpha,\Xi_{q,t},\lambda)\over\partial\alpha}
&= \textstyle
-{a(y_t;\alpha,\Xi_{q,t},\lambda)\over \alpha}
+
2 a'(y_t;\alpha,\Xi_{q,t},\lambda) \eta_{\alpha; \lambda} \,
{\partial \eta_{\alpha; \lambda}\over \partial\alpha},
\\[0,2cm]
\textstyle
{\partial a(y_t;\alpha,\Xi_{q,t},\lambda)\over\partial\gamma^*}
&= \textstyle
-
a'(y_t;\alpha,\Xi_{q,t},\lambda) \,
{1\over \Xi_{q,t}^2} \,
{\partial \Xi_{q,t}\over \partial \gamma^*};
\quad 
\gamma^*\in\{\varpi, \rho_{1},\ldots,\rho_{r}, \sigma_{1}, \ldots, \sigma_{s}\} ,
\\[0,2cm]
\textstyle
{\partial a(y_t;\alpha,\Xi_{q,t},\lambda)\over\partial\lambda}
&= \textstyle
2 a'(y_t;\alpha,\Xi_{q,t},\lambda) \,
\eta_{\alpha; \lambda} \,
{\partial \eta_{\alpha; \lambda}\over \partial\lambda},
\end{align*}
and
\begin{align*}
\textstyle
{\partial a'(y_t;\alpha,\Xi_{q,t},\lambda)\over\partial\alpha}
&= \textstyle
-{a'(y_t;\alpha,\Xi_{q,t},\lambda)\over 2\alpha y_t}
+
2 a''(y_t;\alpha,\Xi_{q,t},\lambda)
\eta_{\alpha; \lambda}\,
{\partial \eta_{\alpha; \lambda}\over \partial\alpha} ,
\\[0,2cm]
\textstyle
{\partial a'(y_t;\alpha,\Xi_{q,t},\lambda)\over\partial\gamma^*}
&= \textstyle
-
a''(y_t;\alpha,\Xi_{q,t},\lambda) \,
{1\over \Xi_{q,t}^2} \,
{\partial \Xi_{q,t}\over \partial \gamma^*},
\quad 
\gamma^*\in\{\varpi, \rho_{1},\ldots,\rho_{r}, \sigma_{1}, \ldots, \sigma_{s}\} ,
\\[0,2cm]
\textstyle
{\partial a'(y_t;\alpha,\Xi_{q,t},\lambda)\over\partial\lambda}
&= \textstyle
2a''(y_t;\alpha,\Xi_{q,t},\lambda) \eta_{\alpha; \lambda}
\,
{\partial \eta_{\alpha; \lambda}\over \partial\lambda},
\end{align*}
where $\eta_{\alpha; \lambda}$ is as in \eqref{definition-eta} and
\begin{align}\label{sec-der-a}
a''(y_t;\alpha,\Xi_{q,t},\lambda) 
=
-
{1\over 4\alpha y^2_t}
\left[\sqrt{\frac{y_t \eta_{\alpha; \lambda}^2}{4\Xi_{q,t}}}+3\sqrt{\frac{4\Xi_{q,t}}{y_t\eta_{\alpha; \lambda}^2}}\right].
\end{align}
Furthermore, by using \eqref{eq:skewbs-dyn-acd-rep}, the above partial derivatives of quantile $\Xi_{q,t}$ are given by
\begin{align*}
\textstyle
{\partial \Xi_{q,t}\over\partial \varpi}
&= \textstyle
\Xi_{q,t}
\big[
1+\sum_{j=1}^{r}\rho_j\, {1\over \Xi_{q,t-j}}\, {\partial \Xi_{q,t-j}\over\partial \varpi}
-
\sum_{j=1}^{s} \sigma_j y_{t-j}\, {1\over \Xi^2_{q,t-j}} \, {\partial \Xi_{q,t-j}\over\partial \varpi}
\big],
\\[0,2cm] 
\textstyle
{\partial \Xi_{q,t}\over\partial \rho_{k}}
&= \textstyle
\Xi_{q,t}
\big[
\rho_k\log(\Xi_{q,t-k}) + 
\sum_{j=1}^{r}\rho_j\, {1\over \Xi_{q,t-j}}\, {\partial \Xi_{q,t-j}\over\partial \rho_k}
-
\sum_{j=1}^{s} \sigma_j y_{t-j}\, {1\over \Xi^2_{q,t-j}} \, {\partial \Xi_{q,t-j}\over\partial \rho_k}
\big], \ k=1,\ldots,r,
\\[0,2cm]
\textstyle
{\partial \Xi_{q,t}\over\partial \sigma_{l}}
&= \textstyle
\Xi_{q,t}
\big[
\sum_{j=1}^{r}\rho_j\, {1\over \Xi_{q,t-j}}\, {\partial \Xi_{q,t-j}\over\partial \sigma_l}
+
y_{t-l}\, {1\over \Xi_{q,t-l}}
-
\sum_{j=1}^{s} \sigma_j y_{t-j}\, {1\over \Xi^2_{q,t-j}} \, {\partial \Xi_{q,t-j}\over\partial \sigma_l}
\big], \quad l=1,\ldots,s.
\end{align*}


%
%
%
%

Let $I(\boldsymbol{\theta_0})$ denote the expected Fisher information matrix,  where $\boldsymbol{\theta_0}$ is the true value of the population parameter vector. Let us consider the following regularity conditions:
\begin{itemize}
\item 
The conditional PDF \eqref{eq:skewbs-pdf-acd-rep} of the skew-QBS ACD is an injective function in $\boldsymbol{\theta}$.
\item 
The log-likelihood function
${\ell}(\bm{\theta})$ is three times differentiable with respect to the unknown parameter vector $\bm{\theta}$ and the third partial derivatives are continuous
in $\bm{\theta}$.
\item 
Let $\bm Y = (Y_{1}, \ldots, Y_{n})^\top$ be a random sample from $Y_{t}\sim \textrm{skew-QBS}(\alpha,\Xi_{q,t}, \lambda)$, for $t=1,\ldots,n$.
For any $\boldsymbol{\theta_0}\in\Theta$, there exists $\delta> 0$ and functions $H_{uvw}({\bm y})$, for all ${\bm y}$ in $\Sigma$ (the support of the data),
such that for $\bm{\theta}= (\alpha, \varpi, \rho_{1},\ldots,\rho_{r}, \sigma_{1}, \ldots, \sigma_{s}, \lambda)^{\top}=(\gamma_1,\gamma_2,\ldots,\gamma_{r+s+3})^{\top}$ and $u,v,w,z = 1, 2,\ldots, r+s+3$, 
\begin{align*}
\left\vert {\partial^3{\ell}(\bm{\theta})\over \partial\gamma_u  \partial\gamma_v\partial\gamma_w}
\right\vert
\leqslant 
H_{uvw}({\bm y}),
\quad \text{for all} \, {\bm y}\in\Sigma 
\, \text{and} \, \vert \gamma_z-\gamma_{0,z} \vert<\delta,
\end{align*}
whenever $\mathbb{E}\big[H_{uvw}({\bm Y})\big]$ is finite.
\item 
For all $\boldsymbol{\theta}\in\Theta$ and $u = 1, 2,\ldots, r+s+3$;
$\mathbb{E}\big[{\partial{\ell}(\bm{\theta})\over \partial\gamma_u}\big]=0$.
\item 
The expected Fisher information matrix  $I(\boldsymbol{\theta})$
is finite, symmetric and positive definite. Furthermore,
\begin{align*}
n\big[I(\boldsymbol{\theta})\big]_{uv}
=
\mathbb{E}\bigg[{\partial{\ell}(\bm{\theta})\over \partial\gamma_u}\, {\partial{\ell}(\bm{\theta})\over \partial\gamma_v}\bigg]
=
-
\mathbb{E}\bigg[{\partial^2{\ell}(\bm{\theta})\over \partial\gamma_u \partial\gamma_v}\bigg],
\end{align*}
for each $u, v= 1, 2,\ldots, r+s+3$.
\end{itemize}
Following \cite{Davison08}, under the above assumptions, it follows that
\begin{align*}
\sqrt{n}\big[I(\boldsymbol{\theta}_0)\big]^{1/2}
\big(\widehat{\bm \theta}_n-{\bm \theta_0}\big)
\underset{ n\rightarrow\infty}{
\xrightarrow{\hspace*{0,7cm}\rm d \hspace*{0,7cm}}
}
{\rm N}\big({\bm 0}, I_{(r+s+3)\times (r+s+3)}\big),
\end{align*}
where $\widehat{\bm \theta}_n$ is the MLE  of
the parameter of interest ${\bm \theta}$,  ${\bm 0}$ is the $(r+s+3)\times 1$ zero vector and $I_{(r+s+3)\times (r+s+3)}$ is the ${(r+s+3)\times (r+s+3)}$ identity matrix. In practice, one may approximate the expected Fisher information matrix by its observed version, which is obtained from the Hessian matrix; see Appendix B. Then, the corresponding diagonal elements of the inverse of the observed Fisher information matrix can be used to approximate the standard errors (SEs) of the estimates.

%

\subsubsection{ECM algorithm}

We can implement the ECM algorithm presented in \cite{vslb:11}. Specifically, let 
${\bm y}=(y_{1},\ldots,y_{n})^{\top}$ and ${\bm u}=(u_{1},\ldots,u_{n})^{\top}$ be the observed and missing data, respectively, with $\bm Y$ and $\bm U$ being their corresponding random vectors. Thus, the complete data vector is written as ${\bm y}_{c}=({\bm y}^{\top}, {\bm u}^{\top})^{\top}$. From \eqref{eq:skewbs-pdf-rep-ebs}, we obtain
\begin{equation}
Y_{t}|(U_{t}=u_{t}) \sim   \textrm{EBS}(\alpha_{\delta},\Xi_{q},\lambda_{h_t}) \quad\text{with}\quad U_{t} \sim \textrm{HN}(0,1), \quad t=1,\ldots,n.\label{eq:ecm:stoch}
\end{equation}
Then, the complete data log-likelihood function for the skew-QBS-ACD model, associated
with ${\bm y}_{c}=({\bm y}^{\top}, {\bm u}^{\top})^{\top}$, is given by (without the constant)
\begin{equation}\label{eq:loglik:ecm}
\ell^c(\bm{\theta})=\sum_{t=1}^{n}\left\{ -\log(\alpha_{\delta})+\log\left(\frac{y_t \eta_{q}^2 + 4 \Xi_{q,t} }{2\eta_{q}\sqrt{\Xi_{q,t}}}\right)-\frac{1}{2(1-\delta^2)}
\big[a(y_t;\alpha,\Xi_{q,t})-\delta h_t\big]^{2}\right\}.
\end{equation}

The implementation of the ECM algorithm requires the computation of the expected value of the complete log-likelihood equation in \eqref{eq:loglik:ecm}, conditional on $ \bm U = \bm u$, that is,
\begin{eqnarray}\label{eq:loglik:ecm:exp}
Q({\bm \theta}|\widehat{\bm \theta}) &=& E\left[\ell^c(\bm{\theta})|\bm U = \bm u \right]
\\[0,2cm]
\nonumber
&=& 
\sum_{t=1}^{n}\left\{ -\log(\alpha_{\delta})+\log\left(\frac{y_t \eta_{q}^2 + 4 \Xi_{q,t} }{2\eta_{q}\sqrt{\Xi_{q,t}}}\right)- \frac{b^2(y_t;\Xi_{q,t})}{2\alpha_{\delta}^2} + \frac{\lambda}{\alpha_{\delta}} b(y_t;\Xi_{q,t})\widehat{u}_{t}-\frac{\lambda^2}{2}\widehat{u}_{t}^2 \right\}.
\end{eqnarray}
where $b(y_t;\Xi_{q,t}) =  \left[\sqrt{\frac{y_t \eta_q^2}{4\Xi_{q,t}}}-\sqrt{\frac{4\Xi_{q,t}}{y_t\eta_{q}^2}}\right]$, $\lambda=\delta/\sqrt{1-\delta^2}$, and 
$$
\widehat{u}_t=E[U_t|Y_t=y_t;\widehat{\bm \theta}]=
\widehat{\delta} a(y_t;\widehat{\alpha},{\Xi}_{q,t})+
\sqrt{1-\widehat{\delta}^2}\Upsilon_{\Phi}\left[  \frac{\widehat{\delta} a(y_t;\widehat{\alpha},{\Xi}_{q,t})}{\sqrt{1-\widehat{\delta}^2}}  \right]
$$

$$
\widehat{u}_t^2=E[U_t^2|Y_t=y_t;\widehat{\bm \theta}]=
\widehat{\delta}^2 a^2(y_t;\widehat{\alpha},{\Xi}_{q,t})+
(1-\widehat{\delta}^2)+\widehat{\delta}\sqrt{1-\widehat{\delta}^2}
\Upsilon_{\Phi}\left[  \frac{\widehat{\delta} a(y_t;\widehat{\alpha},{\Xi}_{q,t})}{\sqrt{1-\widehat{\delta}^2}}  \right]
a(y_t;\widehat{\alpha},{\Xi}_{q,t}),
$$
$$
 \log(\Xi_{q,t}) = \widehat{\varpi} + \sum_{j=1}^{r}\widehat{\rho}_{j} \log(\Xi_{q,t-j}) + \sum_{j=1}^{s}\frac{\widehat{\sigma}_{j}\,y_{t-j}}{\Xi_{q,t-j}}, 
$$
with $\Upsilon_{\Phi}(x)=\phi(x)/\Phi(x)$ for $x\in\mathbb{R}$; see \cite{vslb:11}.

The steps to obtain the ML estimates of the skew-QBS-ACD model via the ECM algorithm have been summarized in Algorithm \ref{alg:1}. Starting values are required to initiate the procedure, particularly, $\widehat{\alpha}^{(0)}, \widehat{\varpi}^{(0)}, \widehat{\rho}_{1}^{(0)},\ldots,\widehat{\rho}_{r}^{(0)}, \widehat{\sigma}_{1}^{(0)}, \ldots, \widehat{\sigma}_{s}^{(0)}$. These are obtained from \cite{saulolla:19}, and then $\widehat{\lambda}^{(0)}$ from \cite{vslb:11}. The iterations of Algorithm \ref{alg:1} are repeated until a stopping criterion has been met, such as $\ell(\bm{\theta}^{v+1})-\ell(\bm{\theta}^{v})\leq \epsilon$, where $\epsilon$ is a prescribed small tolerance level.

\begin{algorithm}[h!]
\small\caption{ECM approach for the skew-QBS-ACD($r, s, q$) model}\label{alg:1}
\begin{algorithmic}[1]
\State {\bf E-step.}
Compute $\widehat{u}_{t}^{(v)}$ and $\widehat{u}_{t}^{2(v)}$, given ${\bm \theta} = \widehat{\bm
\theta}^{(r)}$, for $t=1,\ldots,n$, $v=1,2,\ldots$;

\State {\bf CM-step 1.}
Fix  $\widehat{\bm{\Delta}}^{(v)} = (\widehat{\varpi}^{(v)}, \widehat{\rho}_{1}^{(v)},\ldots,\widehat{\rho}_{r}^{(v)}, \widehat{\sigma}_{1}^{(v)}, \ldots, \widehat{\sigma}_{s}^{(v)})^{\top}$ and update $\widehat{\alpha}^{(v)}$ and $\widehat{ \delta}^{(v)}$ as
$$\widehat{\alpha}^{2(v+1)}=\frac{1}{n}\sum_{t=1}^{n}b^2(y_t;\Xi_{q,t})+
\left[1- \frac{1}{n}\sum_{t=1}^{n}\widehat{u}_t^{2(v)} \right]
\left[ \frac{\sum_{t=1}^{n}\widehat{u}_{t}^{(v)} b(y_t;\Xi_{q,t}) }{\sum_{t=1}^{n}\widehat{u}_t^{2(v)}} \right]^2\quad \textrm{and}$$
$$\widehat{\delta}^{(v+1)}=\frac{1}{\widehat{ \alpha}^{(v)}}\left[ \frac{\sum_{t=1}^{n}\widehat{u}_{t}^{(v)} b(y_t;\Xi_{q,t}) }{\sum_{t=1}^{n}\widehat{u}_t^{2(v)}} \right],$$
with
$$
 \log(\Xi_{q,t}) = \widehat{\varpi}^{(v)} + \sum_{j=1}^{r}\widehat{\rho}_{j}^{(v)} \log(\Xi_{q,t-j}) + \sum_{j=1}^{s}\frac{\widehat{\sigma}_{j}^{(v)}\,y_{t-j}}{\Xi_{q,t-j}};
$$

\State {\bf CM-step 2.}
Fix $\widehat{\alpha}^{(v+1)}$ and $\widehat{\delta}^{(v+1)}$ and update 
$\widehat{\bm{\Delta}}^{(r)}$ by maximizing \eqref{eq:loglik:ecm:exp}.
\end{algorithmic}
\end{algorithm}

\subsection{Residual analysis}

To assess the goodness of fit and departures from the assumptions of the model, we shall use the generalized Cox-Snell (GCS) residual, given by
\begin{equation}\label{eq_sec:coxsnell}
 R^\textrm{GCS}_t = -\log\big[\widehat{S}(y_{t}|\Omega_{t-1})\big],
\end{equation}
where $\widehat{S}(\cdot)$ denotes the survival function fitted to the ACD data. This residual has a standard exponential, EXP(1), distribution when the model is correctly specified. Then, quantile-quantile (QQ) plots with simulated envelope can be used to assess the fit.

\section{Monte Carlo simulation}\label{sec:04}

We present the results of two Monte Carlo simulation studies for the skew-QBS-ACD($r, s, q$) model. Results are presented only for order of the lags set as $p=1$ and $q=1$, since a higher order for skew-QBS-ACD models did not improve the model fit considerably; see related results in \citet{b:10} and \citet{saulolla:19}. The first study considers the evaluation of the performance of the ML estimation procedure developed in Algorithm \ref{alg:1}, while the second study evaluates the empirical distribution of the residuals.

\subsection{ML estimation via the ECM algorithm}\label{sec:4.1}

The first simulation scenario considers the following setting: sample size $n \in \{500, 1000, 2000\}$, vector of true parameters $(\varpi,\rho,\sigma,\alpha,\lambda) = (0.20,0.70,0.10,0.50,-0.50)$ and $(\varpi,\rho,\sigma,\alpha,\lambda) = (0.20,0.70,0.10,1.00,-0.50)$, and $q=\{0.20,0.50,0.80\}$, with $200$ Monte Carlo replications for each sample size. The skew-QBS-ACD samples were generated using the transformation 
\begin{equation*}
 Y_t=\frac{\Xi_{q,t}}{\eta_{q}^2}\left[ \alpha X_t + \sqrt{ (\alpha X_t)^2 +4   }    \right]^2,
\end{equation*}
where $\eta_{q}$ is as in \eqref{at-rep}, $\Xi_{q,t}$ is as in \eqref{eq:skewbs-dyn-acd-rep}, and $X_t$ is a SN random variable generated by the function \texttt{rsn} of the \texttt{R} package \texttt{sn}; see \cite{sn:19}.

The ML estimation results are presented in Table~\ref{tab:MC:1} wherein the empirical mean, bias,  root mean squared error (RMSE), coefficients of skewness, and coefficients of (excess) kurtosis are all reported. A look at the results in Table~\ref{tab:MC:1} allows us to conclude that, in general, as the sample size increases, the bias and $\textrm{RMSE}$ of the estimates decrease, as expected. Moreover, $\widehat{\varpi}$, $\widehat{\rho}$, $\widehat{\sigma}$, $\widehat{\alpha}$ and $\widehat{\lambda}$ seem all to be consistent and marginally asymptotically distributed as normal.

\begin{table}[!ht]
\scriptsize
\centering
\renewcommand{\tabcolsep}{0.05cm}
 \caption{\small Summary statistics from simulated skew-QBS-ACD data.}\label{tab:MC:1}
 \begin{tabular}{l rrr c rrrrrrr}
\hline
& \multicolumn{3}{c}{$q=0.20$}&& \multicolumn{3}{c}{$q=0.50$} && \multicolumn{3}{c}{$q=0.80$}\\
\cline{2-4} \cline{6-8} \cline{10-12}
Statistic  & $n=500$       & $n=1000$    & $n=2000$  &\,\,& $n=500$       & $n=1000$ & $n=2000$ &\,\,& $n=500$       & $n=1000$ & $n=2000$\\
\cline{2-4} \cline{6-8} \cline{10-12}\\[-0.35cm]
&  \multicolumn{3}{c}{$\widehat \varpi$}  &\,\,&\multicolumn{3}{c}{$\widehat \varpi$}\\
\cline{2-4} \cline{6-8} \cline{10-12}
 True value               & 0.2000 & 0.2000 & 0.2000 && 0.2000 & 0.2000 & 0.2000  && 0.2000 & 0.2000 & 0.2000 \\
  Mean                    & 0.2761&  0.2494 & 0.2027   && 0.3304 & 0.2459 & 0.2205  && 0.3116 & 0.2504 & 0.2135 \\ 
  Bias                    & 0.0761 & 0.0494 & 0.0027   && 0.1304 & 0.0459 & 0.0205  && 0.1116 & 0.0504 & 0.0135 \\ 
  Root Mean Squar. Error  & 0.3721 & 0.2820 & 0.0658   && 0.3894 & 0.2087 & 0.0920  && 0.3496 & 0.2267 & 0.1005 \\ 
  Coefficient of skewness & 4.7959 & 6.5433 & 0.4690   && 2.8714 & 4.9210 & 0.7686  && 2.1133 & 3.2815&  0.8670 \\ 
  Coefficient of kurtosis & 27.5630& 49.5154 & 2.6336  && 9.3422 & 34.9339 & 0.8334 && 7.6321 &18.5650 & 3.8926  \\ 
\cline{2-4} \cline{6-8} \cline{10-12}\\[-0.35cm]
&  \multicolumn{3}{c}{$\widehat \rho$}  &\,\,&\multicolumn{3}{c}{$\widehat \rho$} &\,\,&\multicolumn{3}{c}{$\widehat \rho$}\\
\cline{2-4} \cline{6-8} \cline{10-12}
 True value               & 0.7000 & 0.7000  & 0.7000    && 0.7000 & 0.7000  & 0.7000  && 0.7000 & 0.7000  & 0.7000 \\ 
  Mean                    & 0.6351 & 0.6590&  0.6991       && 0.5681 & 0.6543  & 0.6798  && 0.5674 & 0.6420 & 0.6855 \\ 
  Bias                    &-0.0649& -0.0410& -0.0009       &&-0.1319& -0.0457 & -0.0202  && -0.1326& -0.0580 &-0.0145 \\ 
  Root Mean Squar. Error  & 0.2912&  0.2211&  0.0581       && 0.3753 & 0.2010  & 0.0961  && 0.4077 & 0.2609&  0.1200 \\ 
  Coefficient of skewness &-4.4652& -6.1797& -0.4745       &&-2.4850& -4.2562 & -0.6326  && -1.8745 &-2.9022 &-0.8088 \\ 
  Coefficient of kurtosis & 25.1453& 45.8789 & 2.8597      && 6.7188 & 27.7012 & 0.7042  && 6.3473& 15.9944 & 4.0282 \\
\cline{2-4} \cline{6-8} \cline{10-12}\\[-0.35cm]
&  \multicolumn{3}{c}{$\widehat \sigma$}  &\,\,&\multicolumn{3}{c}{$\widehat \sigma$}  &\,\,&\multicolumn{3}{c}{$\widehat \sigma$}\\
\cline{2-4} \cline{6-8} \cline{10-12}
 True value               & 0.1000 & 0.1000 & 0.1000   && 0.1000 & 0.1000 & 0.1000  && 0.1000 & 0.1000 & 0.1000 \\ 
  Mean                    & 0.1021 & 0.1003&  0.0995     && 0.1066 & 0.1016 & 0.1008  && 0.1054 & 0.0993 & 0.0988 \\ 
  Bias                    & 0.0021 & 0.0003& -0.0005     && 0.0066 & 0.0016 & 0.0008  && 0.0054& -0.0007& -0.0012 \\ 
  Root Mean Squar. Error  & 0.0291 & 0.0204&  0.0122     && 0.0365 & 0.0248 & 0.0170  && 0.0526 & 0.0378&  0.0253 \\ 
  Coefficient of skewness &-0.8058 &-0.9762&  0.0715     &&-0.1877 & -0.3339 & 0.0317 && -0.1767 & 0.3540&  0.2581 \\ 
  Coefficient of kurtosis & 5.1543 & 8.1370&  3.0733     &&0.3655 & 1.5183 & -0.1561  && 3.6280 & 3.6456 & 2.7845 \\ 
\cline{2-4} \cline{6-8} \cline{10-12}\\[-0.35cm]
&  \multicolumn{3}{c}{$\widehat \alpha$}  &\,\,&\multicolumn{3}{c}{$\widehat \alpha$} &\,\,&\multicolumn{3}{c}{$\widehat \alpha$}\\
\cline{2-4} \cline{6-8} \cline{10-12}
 True value               & 0.5000 & 0.5000 & 0.5000   && 0.5000 & 0.5000 & 0.5000 && 1.0000 & 1.0000 & 1.0000 \\  
  Mean                    & 0.5729 & 0.5528 & 0.5466     && 0.5432 & 0.5172 & 0.5047 && 1.1255 & 1.0902 & 1.0469 \\ 
  Bias                    & 0.0729 & 0.0528 & 0.0466     && 0.0432 & 0.0172 & 0.0047 &&0.1255 & 0.0902 & 0.0469 \\ 
  Root Mean Squar. Error  & 0.0923 & 0.0675 & 0.0581     && 0.0773 & 0.0464 & 0.0304 && 0.2445 & 0.1736&  0.1174 \\ 
  Coefficient of skewness & 0.3083 & 0.3835 & 0.2141     && 0.6612 & 0.6130 & 0.5629 && 1.0625 & 1.0157&  0.9184 \\ 
  Coefficient of kurtosis & 2.4722 & 2.3807 & 2.3406     && 0.1477 & -0.1976& -0.2117 && 3.2276 & 3.6309&  2.8638 \\ 
\cline{2-4} \cline{6-8} \cline{10-12}\\[-0.35cm]
&  \multicolumn{3}{c}{$\widehat \lambda$}  &\,\,&\multicolumn{3}{c}{$\widehat \lambda$} &\,\,&\multicolumn{3}{c}{$\widehat \lambda$}\\
\cline{2-4} \cline{6-8} \cline{10-12}
  True value              & -0.5000 & -0.5000 & -0.5000 && -0.5000 & -0.5000 & -0.5000 &&-0.5000 & -0.5000 & -0.5000 \\      
  Mean                    & -0.9964 & -0.8655 &-0.8285     && -0.6447 & -0.5228 & -0.4382 && -0.6915 &-0.7408 &-0.6222 \\ 
  Bias                    & -0.4964& -0.3655 &-0.3285     && -0.1447 & -0.0228 & 0.0618  && -0.1915 &-0.2408& -0.1222 \\ 
  Root Mean Squar. Error  & 0.6272 & 0.4570 & 0.4013      && 0.6439  & 0.4280  & 0.3401  && 0.8838&  0.5625 & 0.3772 \\ 
  Coefficient of skewness & 0.7651& -0.1717 &-0.1820      && 0.4968  & 0.4101  & 0.6433  &&  1.1817 &-0.0401& -0.6527 \\ 
  Coefficient of kurtosis & 8.2048&  2.4254 & 2.2408      &&0.2460  &-0.5180  &-0.0385   && 8.2765 & 5.4238&  2.5382 \\ 
   \hline
\end{tabular}
\end{table}

\subsection{Empirical distribution of the residuals}\label{sec:4.2}

In this subsection, we present a Monte Carlo simulation study to evaluate the performance of the GCS residuals. The sample generation as well as the simulation scenario are the same ($\alpha=0.50$) in Subsection \ref{sec:4.1}.

Table \ref{tab:res:1} presents the empirical mean, standard deviation, coefficient of skewness and coefficient of kurtosis, whose values are expected to be 1, 1, 2 and 9, respectively, for $R^\textrm{GCS}$. Notice that the reference distribution for $R^\textrm{GCS}$ is standard exponential distribution. From Table \ref{tab:res:1}, we note that, in general, the considered residuals conform well with the reference distribution.

\begin{table}[!ht]
\footnotesize
\centering
\renewcommand{\tabcolsep}{0.05cm}
 \caption{\small Summary statistics of the $R^\textrm{GCS}$ residuals.}\label{tab:res:1}
 \begin{tabular}{l rrr c rrrrrrr}
\hline
Statistic & \multicolumn{3}{c}{$q=0.20$}&& \multicolumn{3}{c}{$q=0.50$} && \multicolumn{3}{c}{$q=0.80$}\\
\cline{2-4} \cline{6-8} \cline{10-12}
  & $n=500$               & $n=1000$    & $n=2000$  &\,\,& $n=500$       & $n=1000$ & $n=2000$ &\,\,& $n=500$       & $n=1000$ & $n=2000$\\
  Mean                    & 0.9963 & 0.9981 & 0.9992 && 0.9968 & 0.9985 & 0.9993  && 0.9962 & 0.9982 & 0.9993\\ 
  Standard deviation      & 1.0118 & 1.0091 & 1.0075 && 1.0046 & 1.0051 & 1.0036  && 1.0029 & 1.0001 & 1.0001\\ 
  Coefficient of skewness & 2.0449 & 2.0210 & 2.0344 && 1.9779 & 2.0052 & 2.0101  && 1.9915 & 1.9639 & 1.9904\\ 
  Coefficient of kurtosis & 9.1639 & 8.9867 & 9.2581 && 8.6297 & 8.8636 & 9.0717  && 8.7890 & 8.5955 & 8.9215\\ 
  \hline
\end{tabular}
\end{table}


\section{Application to price duration data}\label{sec:05}

\subsection{Exploratory data analysis}

In this section, a real high frequency financial data set corresponding to price durations of BASF-SE stock on 19th April 2016 is used to illustrate the proposed methodology; see \cite{saulolla:19}. The data were adjusted to remove intraday seasonality, namely, activity is low at lunchtime and high at the beginning and closing of the trading day. Table~\ref{tab:descp} provides descriptive statistics for both plain and adjusted price durations. From this table, we note that both series show positive skewness and a high degree of kurtosis.

\begin{table}[!ht]
\centering
\caption{Summary statistics for the BASF-SE data.}\label{tab:descp}
\begin{tabular}{lcccccccc}
\hline
BASF-SE data                    &    Plain     &  &   Adjusted   \\
\hline
$n$                                &   2194       &  &  2194   \\
Minimum                            &  1           &  &  0.061    \\
10th percentile                    &  1            &  &  0.127         \\
Mean                               &   12.292     &  &  1.067       \\
50th percentile (median)           &  7           &  &  0.682       \\
90th percentile                    &  28           &  &  2.438     \\
Maximum                            &  266         &  &  9.776        \\
Standard deviation                 & 16.156       &  &  1.167        \\
Coefficient of variation           & 131.43\%     &  &  109.35\%        \\
Coefficient of skewness            & 4.695        &  &     2.521     \\
Coefficient of excess kurtosis     & 39.516       &  &  5.902     \\
\hline
\end{tabular}
\end{table}

\subsection{Estimation and model validation}

Table \ref{tab:resulall} presents the ML estimates, computed by the ECM algorithm, and SEs for the skew-QBS-ACD model parameters with $q = \{0.13,0.50\}$. On the one hand, the value of $q = 0.13$ was chosen through the profile log-likelihood, that is, for a grid of values of $q$, $0.01, 0.02,\ldots,0.99$, we estimated the model parameters and computed the corresponding log-likelihood functions. Then, the optimal value of $q$ was the one which maximized the log-likelihood function. On the other hand, the value of $q = 0.50$ corresponds to the skew-QBS-ACD in terms of the conditional median duration, such as the BS-ACD model. Table \ref{tab:resulall} also reports the log-likelihood value, Akaike (AIC) and Bayesian information (BIC) criteria, and $p$-values of the Ljung-Box (LB) statistic, $Q(l)$ say, for up to $l$-th order serial correlation. The LB statistics evaluate the autocorrelation of the residuals. For comparative purpose, the results of the exponential ACD (EXP-ACD), generalized gamma ACD (GG-ACD) and BS-ACD models, are provided as well. The results of Table \ref{tab:resulall} reveal that 
the skew-QBS-ACD models with $q = \{0.13,0.50\}$ provide better adjustments compared to other models based on the values of log-likelihood, AIC and BIC. However, these values are not substantially different for the two values of $q$. Finally, the LB $p$-values provide no evidence of serial correlation in the residuals for all models considered in this application.

Figure \ref{fig:estimates} plots the estimated parameters of the skew-QBS-ACD model across $q\in\{0.01,\ldots, 0.99\}$. Results suggest that this BASF-SE series displays asymmetric dynamics: the estimates of ${\varpi}$ and ${\sigma}$ tend to be increasing across $q$, with a change of sign in the first case. On the other hand, the estimates of $\rho$, $\alpha$ and $\lambda$ appear to vary with the same pattern, but without an explicit trend.

 \begin{table}[!ht]
 \centering
\begin{small}
 \caption{{ML estimates (with SE in parentheses) and model selection measures for fit to the BASF-SE data.}}
 \begin{tabular}{l rrrrrrrrrrrrrrrrrrrrr}
\hline   
          & EXP-ACD& GG-ACD &  BS-ACD & skew-QBS-ACD         &  skew-QBS-ACD      \\[-0.15cm]
          &        &        &         &   ($q=0.50$)        &    ($q=0.13$)       \\ 
\hline
$\varpi$  &   0.0658  &  0.0724    &$-$0.2756    &$-$0.2362 &  -0.7723     \\[-0.1cm]
          & (0.0180)  & (0.0183)   & (0.0975)    & (0.0822) &   (0.3034) \\
$\rho$    &  0.6807   &  0.7204    &    0.5800   &  0.5948  &   0.5962 \\[-0.1cm]
          & (0.1203)  & (0.0936)   & 0.1759      & (0.1712) &  (0.1685)\\
$\sigma$  & 0.0937   &  0.1041     &    0.0403   &  0.0450  & 0.0119    \\[-0.1cm]
          & (0.0206) & (0.0211)    &  (0.0107)   & (0.0118) & (0.0031) \\
          &          &             &             &                             \\[-0.2cm]
$\alpha$  &          & 15.0013     & 1.1975      &  1.3060   &  1.3066       \\[-0.1cm]
          &          &(6.5247)     & (0.0180)    & (0.0633)  &  (0.0638)      \\
$\xi$     &          & 0.2467      &             &           &           \\[-0.1cm]
          &          &  (0.0547)   &             &           &               \\[0.2cm]
$\lambda$ &          &             &             &  0.5197   &  0.5213        \\[-0.1cm]
          &          &             &             & (0.1578)  &  (0.1586)              \\[0.2cm]         
Log-lik.  & -2315.916& -2239.192   & -2226.839   &  -2216.559  & -2216.516   \\
AIC       & 4637.832 & 4488.384    & 4461.659    &  4443.118  &  4443.033  \\
BIC       & 4654.912 & 4516.851    & 4484.433    &  4471.585  &  4471.500  \\
$Q(4)$    &  0.9451  & 0.9451      & 0.8152      &  0.8152    &  0.8141      \\
$Q(16)$   &  0.3468  & 0.3468      & 0.2566      &  0.2566    &  0.2565      \\
\hline
\end{tabular}
\label{tab:resulall}
\end{small}
\end{table}

\begin{figure}[!ht]
\centering
\subfigure[$\widehat{\varpi}$]{\includegraphics[height=5cm,width=5cm]{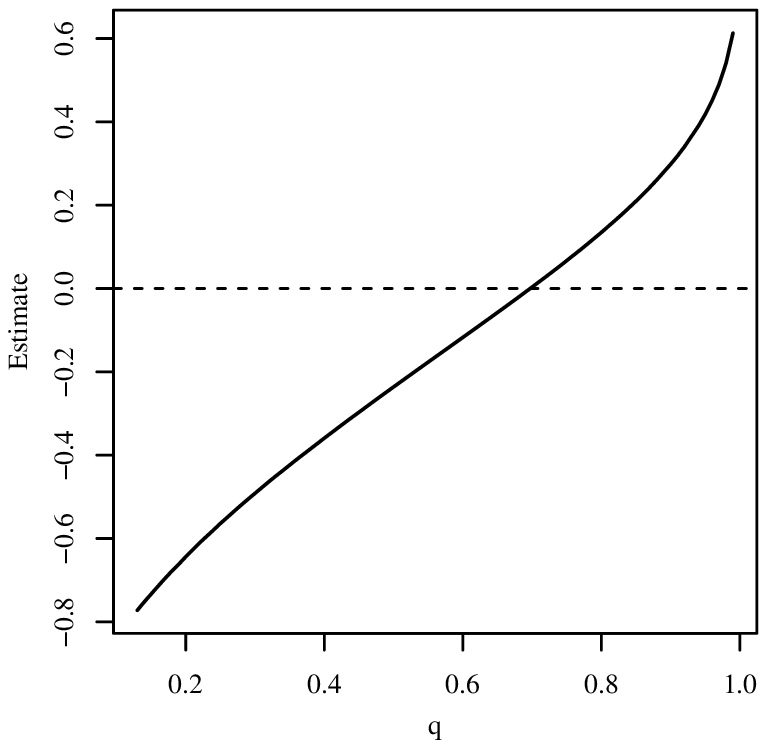}}
\subfigure[$\widehat{\rho}$]{\includegraphics[height=5cm,width=5cm]{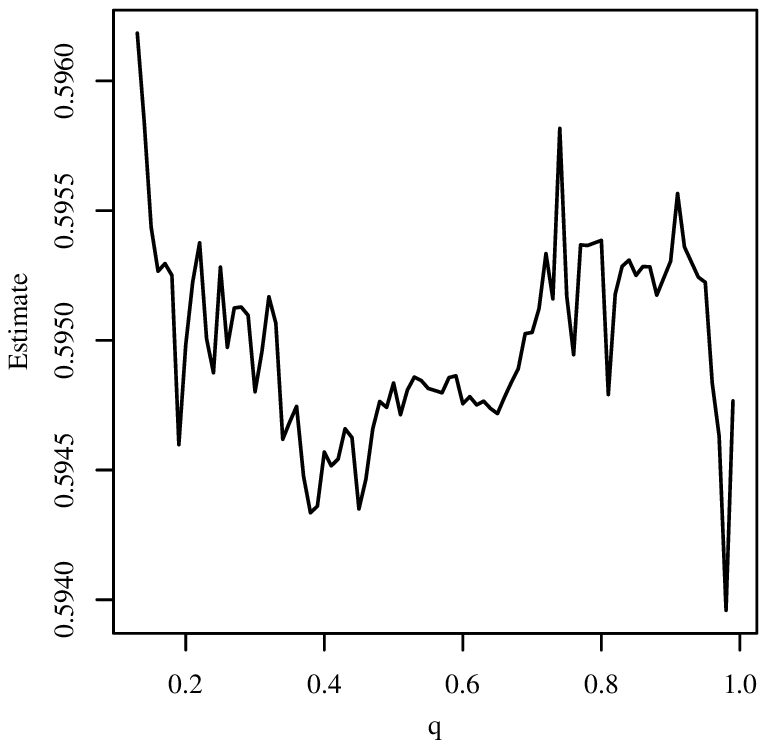}}
\subfigure[$\widehat{\sigma}$]{\includegraphics[height=5cm,width=5cm]{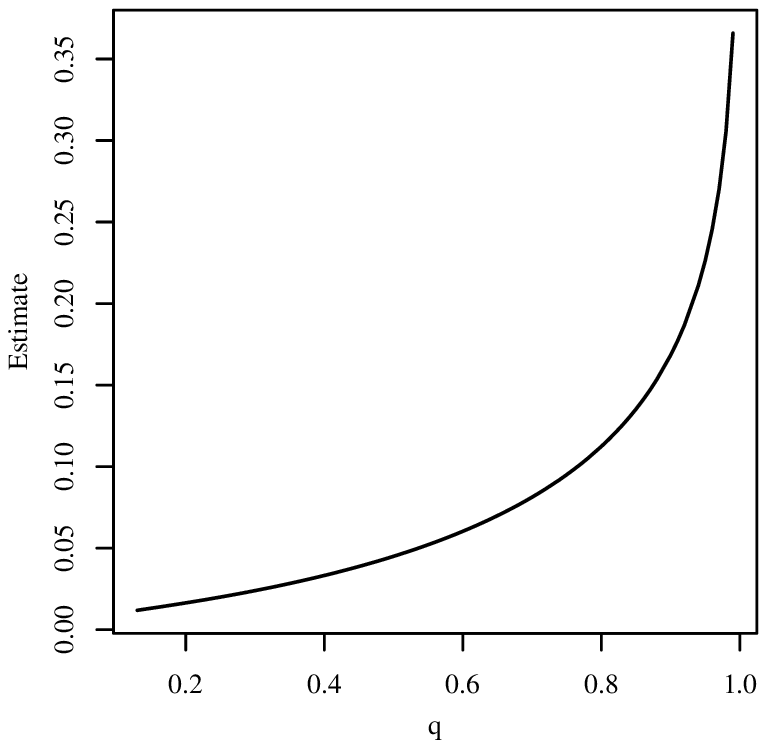}}\\
\subfigure[$\widehat{\alpha}$]{\includegraphics[height=5cm,width=5cm]{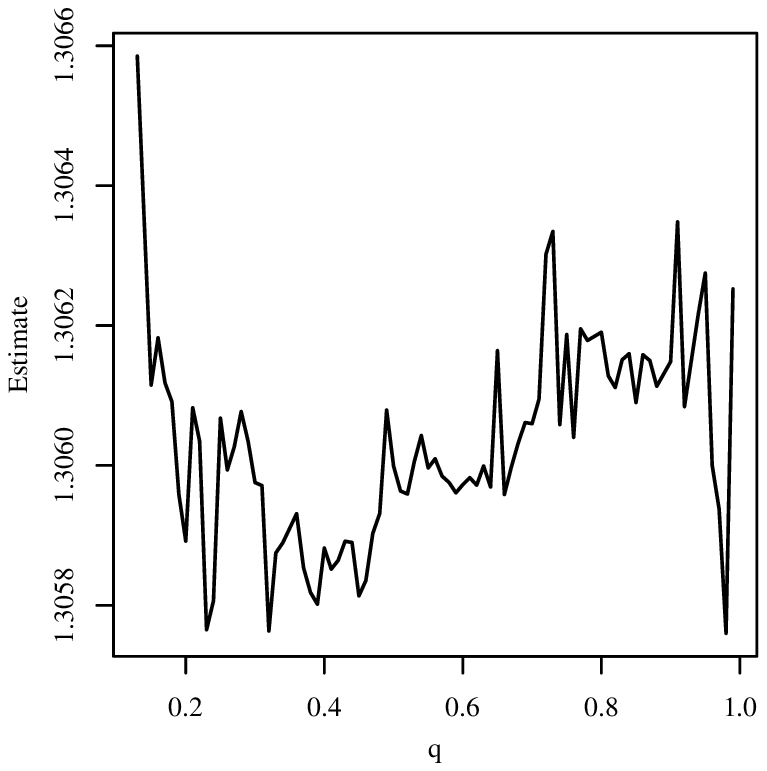}}
\subfigure[$\widehat{\lambda}$]{\includegraphics[height=5cm,width=5cm]{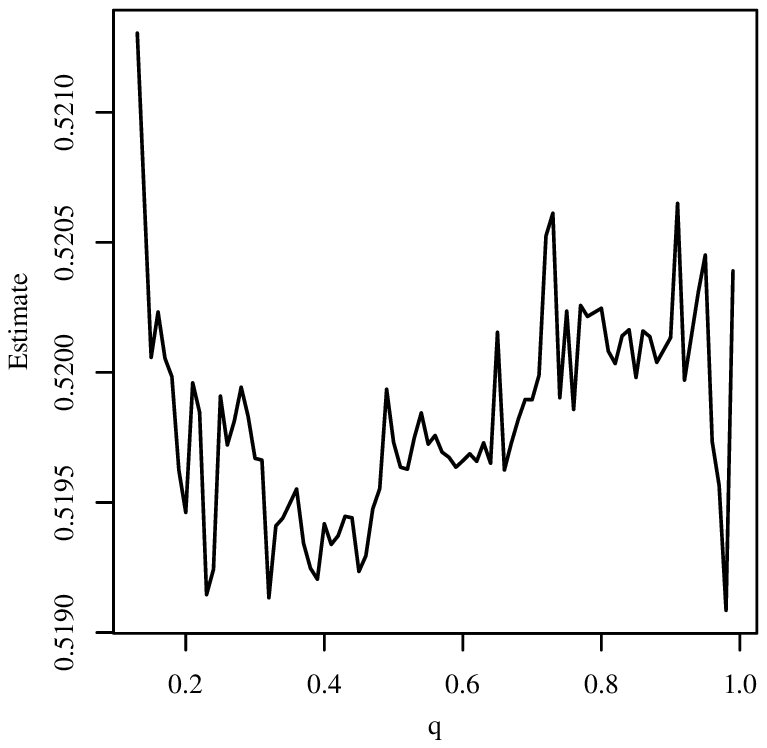}}
 \caption{\small {Estimated parameters in the skew-QBS-ACD model across $q$ for the BASF-SE data.}}
\label{fig:estimates}
\end{figure}

Figure~\ref{fig:qqplots} shows the QQ plots with simulated envelope of the generalized Cox-Snell residuals for the models considered in Table \ref{tab:resulall}. We see clearly that the skew-QBS-ACD models provide better fit than the EXP-ACD, GG-ACD  and BS-ACD models.

\begin{figure}[!ht]
\centering
\subfigure[EXP-ACD]{\includegraphics[height=5cm,width=5cm]{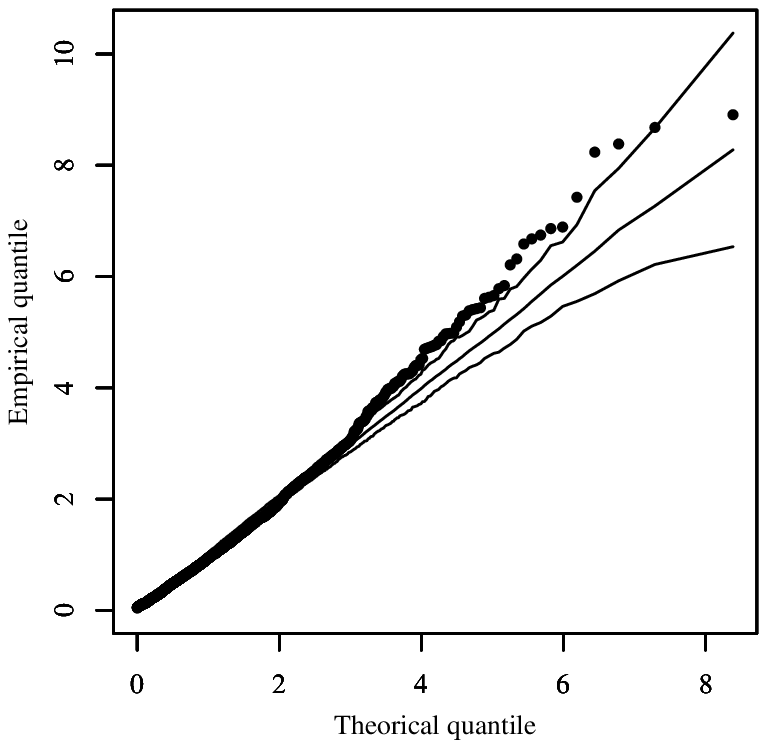}}
\subfigure[GG-ACD]{\includegraphics[height=5cm,width=5cm]{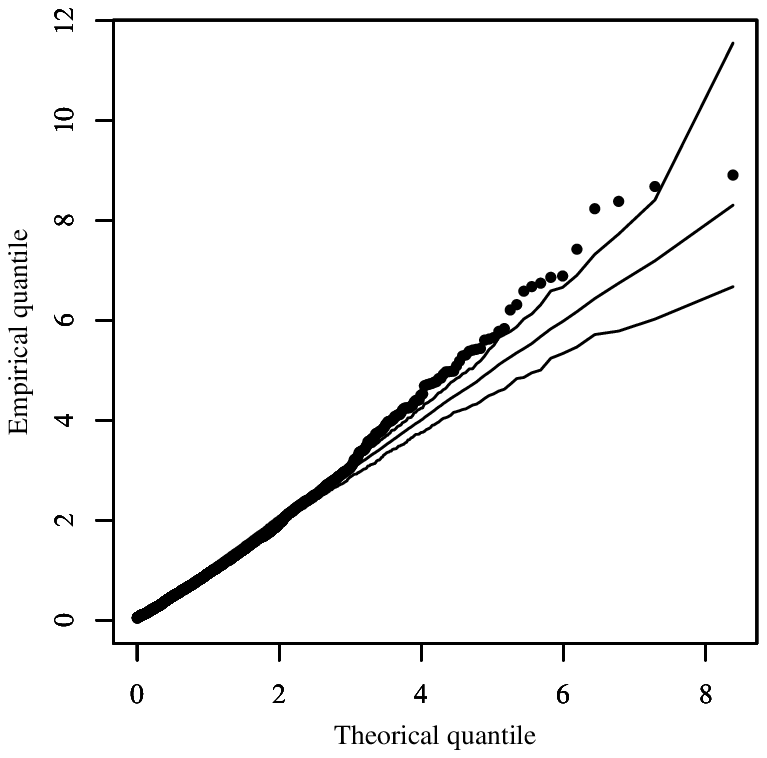}}
\subfigure[BS-ACD]{\includegraphics[height=5cm,width=5cm]{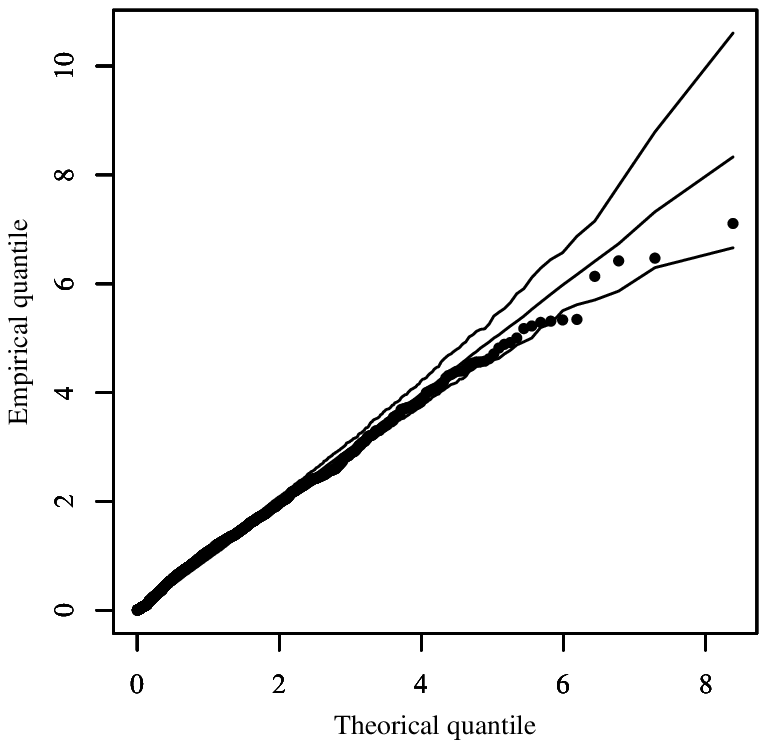}}\\
\subfigure[skew-QBS-ACD($q=0.50$)]{\includegraphics[height=5cm,width=5cm]{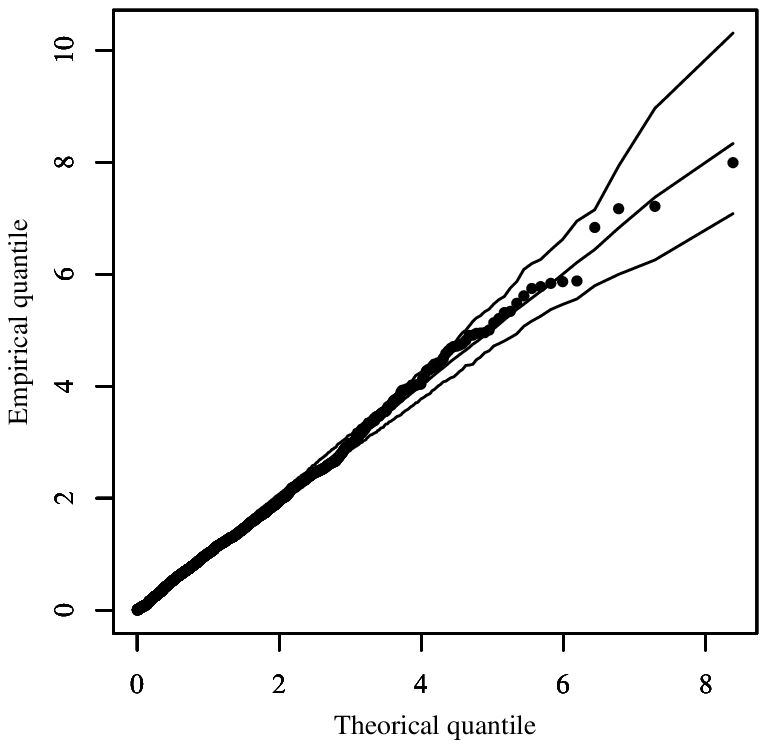}}
\subfigure[skew-QBS-ACD($q=0.13$)]{\includegraphics[height=5cm,width=5cm]{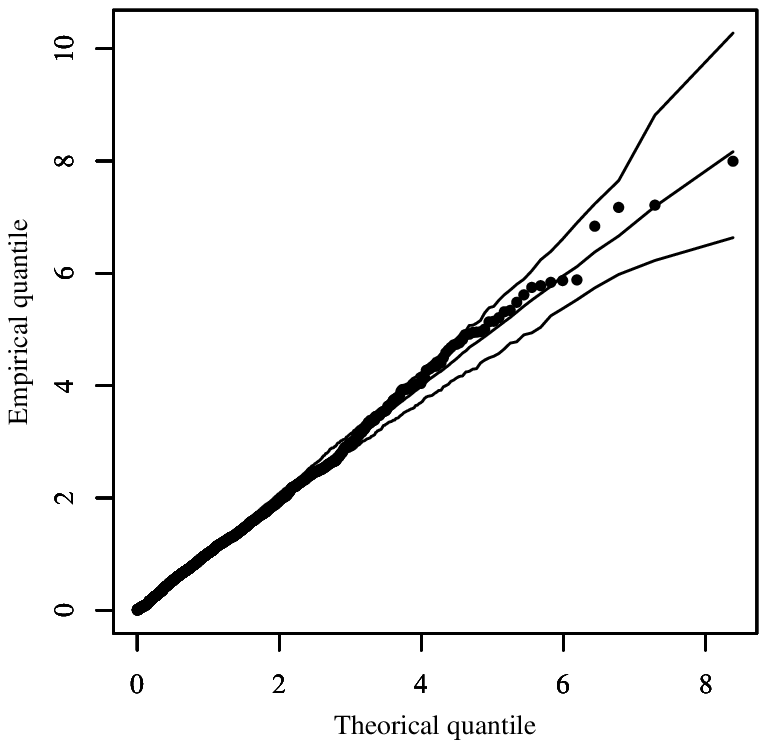}}
 \caption{\small {QQ plot and its envelope for the generalized Cox-Snell residual in the indicated model with the BASF-SE data.}}
\label{fig:qqplots}
\end{figure}

\subsection{Forecasting performance}

Figure \ref{figfore:1} plots the in-sample and out-of-sample quantile forecasts from the adjusted skew-QBS-ACD and BS-ACD models. The construction of the plots is as follows. For each value of $q\in\{0.01,\ldots, 0.99\}$, the skew-QBS-ACD model is estimated and the respective estimates are used to obtain the forecast of the quantile of interest. Since the BS-ACD model is independent of the value of $q$, it is estimated only once and its forecasts are obtained from this single estimation. For in-sample forecasting, all data are used. In the case of out-of-sample forecasting, 2/3 of the data are used for estimation and the other 1/3 for forecasting.

From Figures \ref{figfore:1}(a,b), we observe that the forecasts of the skew-BS-ACD model are much closer to the actual values than those of the BS-ACD model.  In fact, the forecasts of this latter model are quite far from the actual values. The superiority of the skew-QBS-ACD model may be due to different estimates obtained for the parameters; see, e.g., Figure \ref{fig:estimates}. For different quantiles, we obtain different estimates, which makes the forecasts closer to the actual values. The results of the estimated mean square error (MSE) between the actual data and in-sample forecasted values of the adjusted skew-QBS-ACD and BS-ACD models are 0.041 and 1.277, respectively. Moreover, the MSE between the actual data and out-of-sample forecasted values of the adjusted skew-QBS-ACD and BS-ACD models are 0.029 and 1.199, respectively. Thus, the MSE values do confirm the results found in Figures \ref{figfore:1}(a,b).

\begin{figure}[!ht]
\vspace{-0.25cm}
\centering
\subfigure[In-sample]{\includegraphics[height=6.2cm,width=6.2cm]{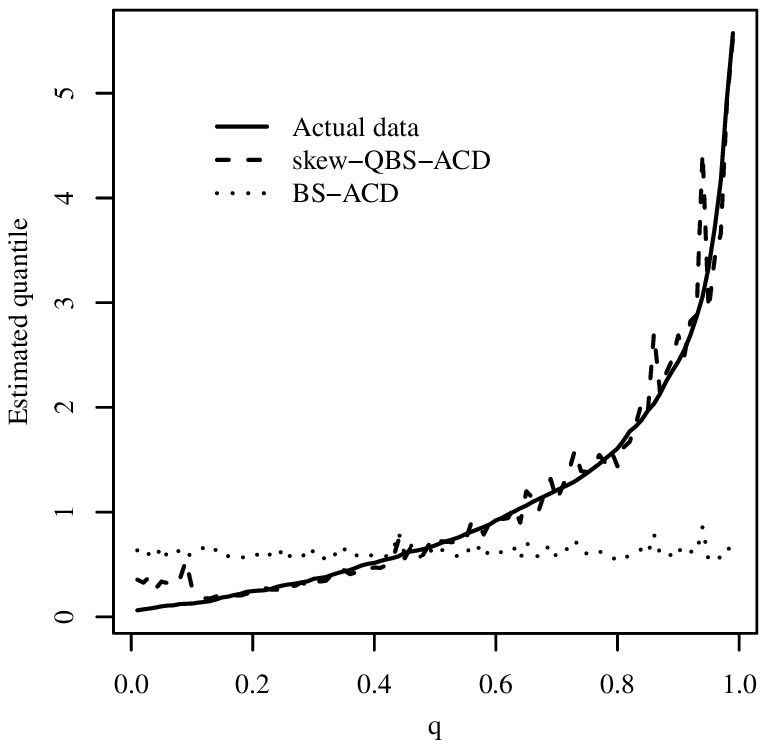}}
\subfigure[Out-of-sample]{\includegraphics[height=6.2cm,width=6.2cm]{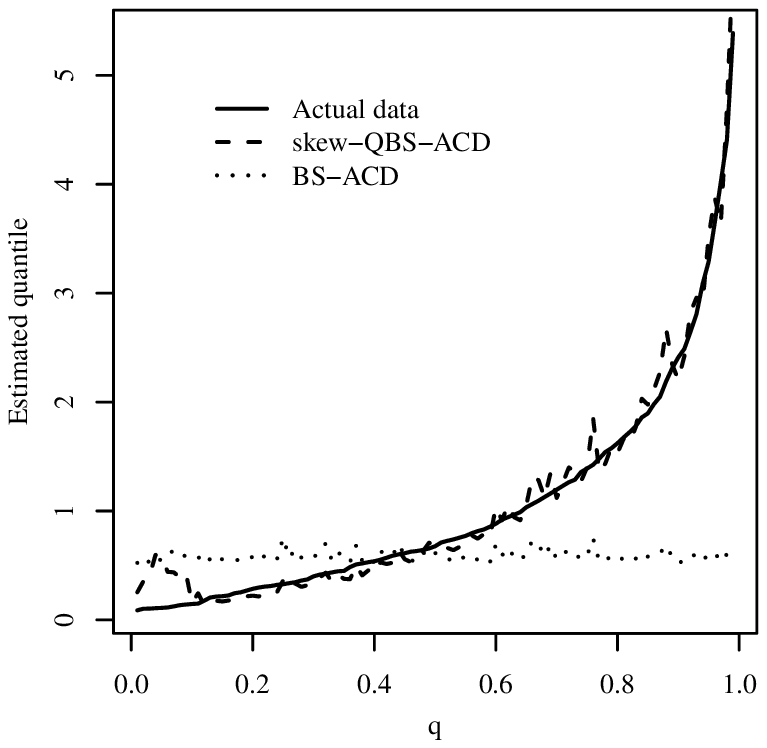}}
\vspace{-0.2cm}
\caption{In-sample and out-of-sample quantile forecasts from the skew-QBS-ACD and BS-ACD models for the BASF-SE data.}\label{figfore:1}
\end{figure}

\section{Concluding remarks}\label{sec:06}
In this paper, we have proposed a quantile autoregressive conditional duration model for dealing with high frequency financial duration data. The proposed model is based on a reparameterized version of the skewed Birnbaum-Saunders distribution, which has the distribution quantile as one of its parameters. Thus, the proposed conditional duration model is defined in terms of a conditional quantile duration, allowing different estimates for different quantiles of interest. A Monte Carlo simulation is carried out to evaluate the performance of the maximum likelihood estimators, study obtained by the ECM algorithm. We have applied the proposed and some other existing models to a real financial duration data set from the German DAX stock exchange. The results are seen to be quite favorable to the proposed model both in terms of model fitting and forecasting ability. The gain in considering the dynamics in terms of conditional quantile duration seems quite evident, especially in the forecasting superiority of the model. As part of future research, it will be of interest to use the theory of estimating functions
for estimating the model parameters \citep{ALLEN2013214}. Furthermore, quantile autoregressive conditional duration models can be used to estimate the intraday volatility of a stock \citep{Tse-2012}. Work on these problems is currently in progress and we hope to report these findings in future papers.


\noindent
\textbf{Funding}: Helton Saulo gratefully acknowledges financial support from FAP-DF.


%

\normalsize


\begin{thebibliography}{}

\bibitem[Allen et~al., 2013]{ALLEN2013214}
Allen, D., Ng, K., and Peiris, S. (2013).
\newblock Estimating and simulating {W}eibull models of risk or price
  durations: {A}n application to {ACD} models.
\newblock {\em The North American Journal of Economics and Finance},
  25:214--225.

\bibitem[Azzalini, 2019]{sn:19}
Azzalini, A. (2019).
\newblock {\em The {R} package \texttt{sn}: The Skew-Normal and Related
  Distributions such as the Skew-$t$ (version 1.5-4).}
\newblock Universit\`a di Padova, Italia.

\bibitem[Azzalini and Capitanio, 2014]{azzalinicap:14}
Azzalini, A. and Capitanio, A. (2014).
\newblock {\em The Skew-Normal and Related Families}.
\newblock Cambridge University Press, Cambridge, England.

\bibitem[Balakrishnan and Kundu, 2019]{balakundu:19}
Balakrishnan, N. and Kundu, D. (2019).
\newblock Birnbaum-saunders distribution: A review of models, analysis, and
  applications.
\newblock {\em Applied Stochastic Models in Business and Industry}, 35:4--49.

\bibitem[Bauwens et~al., 2003]{bauwensetal:03}
Bauwens, L., Galli, F., and Giot, P. (2003).
\newblock The moments of log-{ACD} models.
\newblock {\em CORE Discussion Paper, Available at SSRN:
  https://ssrn.com/abstract=375180 or http://dx.doi.org/10.2139/ssrn.375180}.

\bibitem[Bhatti, 2010]{b:10}
Bhatti, C. (2010).
\newblock {The Birnbaum-Saunders autoregressive conditional duration model}.
\newblock {\em Mathematics and Computers in Simulation}, 80:2062--2078.

\bibitem[Bhogal and Variyam~Thekke, 2019]{bhogalvariyam:19}
Bhogal, S.~K. and Variyam~Thekke, R. (2019).
\newblock Conditional duration models for high-frequency data: {A} review on
  recent developments.
\newblock {\em Journal of Economic Surveys}, 33(1):252--273.

\bibitem[Birnbaum and Saunders, 1969]{bs:69a}
Birnbaum, Z.~W. and Saunders, S.~C. (1969).
\newblock A new family of life distributions.
\newblock {\em Journal of Applied Probability}, 6:319--327.

\bibitem[Davison, 2008]{Davison08}
Davison, A.~C. (2008).
\newblock {\em {Statistical Models}}.
\newblock Cambridge Series in Statistical and Probabilistic Mathematics,
  Cambridge University Press.

\bibitem[Engle and Russell, 1998]{er:98}
Engle, R. and Russell, J. (1998).
\newblock {Autoregressive conditional duration: {A} new method for irregularly
  spaced transaction data}.
\newblock {\em Econometrica}, 66:1127--1162.

\bibitem[Glaser, 1980]{gla:80}
Glaser, R.~E. (1980).
\newblock Bathtub and related failure rate characterizations.
\newblock {\em Journal of the American Statistical Association}, 75:667--672.

\bibitem[Koenker and Xiao, 2006]{koenkexiao:06}
Koenker, R. and Xiao, Z. (2006).
\newblock Quantile autoregression.
\newblock {\em Journal of the American Statistical Association},
  101(475):980--990.

\bibitem[Pacurar, 2008]{pa:08}
Pacurar, M. (2008).
\newblock {Autoregressive conditional durations models in finance: {A} survey
  of the theoretical and empirical literature}.
\newblock {\em Journal of Economic Surveys}, 22:711--751.

\bibitem[S\'anchez et~al., 2021]{sanchezetal:19}
S\'anchez, L., Leiva, V., Galea, M., and Saulo, H. (2021).
\newblock Birnbaum-saunders quantile regression and its diagnostics with
  application to economic data.
\newblock {\em Applied Stochastic Models in Business and Industry}, 37:53--73.

\bibitem[Saulo et~al., 2019]{saulolla:19}
Saulo, H., Le\~{a}o, J., Leiva, V., and Aykroyd, R.~G. (2019).
\newblock Birnbaum-{S}aunders autoregressive conditional duration models
  applied to high-frequency financial data.
\newblock {\em Statistical Papers}, 60:1605--1629.

\bibitem[Tse and Yang, 2012]{Tse-2012}
Tse, Y.-k. and Yang, T.~T. (2012).
\newblock Estimation of high-frequency volatility: An autoregressive
  conditional duration approach.
\newblock {\em Journal of Business and Economic Statistics}, 30.

\bibitem[Vilca and Leiva, 2006]{vl:06}
Vilca, F. and Leiva, V. (2006).
\newblock {A new fatigue life model based on the family of skew-elliptical
  distributions}.
\newblock {\em Communications in Statistics: Theory and Methods}, 35:229--244.

\bibitem[Vilca et~al., 2011]{vslb:11}
Vilca, F., Santana, L., Leiva, V., and Balakrishnan, N. (2011).
\newblock {Estimation of extreme percentiles in Birnbaum-Saunders
  distributions}.
\newblock {\em Computational Statistics and Data Analysis}, 55:1665--1678.

\end{thebibliography}

\appendix

\section{Appendix}

\subsection{Some properties of the skew-QBS distribution}

In this subsection, some results on the unimodality property, monotonicity of the hazard function, among others, of the skew-QBS distribution, are obtained.

\begin{prop}[Modes]
A mode of the \textrm{skew-QBS}$(\alpha,\Xi_{q},\lambda)$ is any point $y_0 = y_0(\alpha,\Xi_{q},\lambda)$ that satisfies the following non-linear equation:
{
	\scalefont{0.955}
\begin{multline*}
\lambda \big[a'(y;\alpha,\Xi_{q})\big]^2
\phi\big[\lambda a(y;\alpha,\Xi_{q})\big] 
+
\Phi\big[\lambda a(y;\alpha,\Xi_{q})\big]
\Big\{a''(y;\alpha,\Xi_{q}) - a(y;\alpha,\Xi_{q}) \Big[a'(y;\alpha,\Xi_{q})\big]^2\Big\}=0.
\end{multline*}
}
\end{prop}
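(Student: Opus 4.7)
The plan is direct differential calculus: compute $\frac{d}{dy}f_{Y}(y;\alpha,\Xi_{q},\lambda)$ from the PDF in \eqref{eq:skewbs-pdf-rep}, set it equal to zero, and simplify. Since the density is positive and smooth on $(0,\infty)$, any interior mode is a stationary point of $f_Y$.

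First I would write $f_{Y}(y;\alpha,\Xi_{q},\lambda) = 2\phi(a)\,\Phi(\lambda a)\,a'$, abbreviating $a=a(y;\alpha,\Xi_{q})$, $a'=a'(y;\alpha,\Xi_{q})$ and $a''=a''(y;\alpha,\Xi_{q})$. Applying the product rule to this triple product gives
\begin{align*}
\tfrac{1}{2}\tfrac{d}{dy}f_{Y}
=\phi'(a)\,a'\,\Phi(\lambda a)\,a' + \phi(a)\,\lambda a'\,\phi(\lambda a)\,a' + \phi(a)\,\Phi(\lambda a)\,a''.
\end{align*}

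Next I would invoke the standard identity $\phi'(x)=-x\phi(x)$, so that $\phi'(a)a' = -a\,\phi(a)\,a'$. Substituting and factoring out the common positive factor $\phi(a)$ yields
\begin{align*}
\tfrac{1}{2\phi(a)}\,\tfrac{d}{dy}f_{Y}
= \lambda\,[a']^{2}\,\phi(\lambda a) + \Phi(\lambda a)\bigl\{a'' - a\,[a']^{2}\bigr\}.
\end{align*}

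Finally, since $\phi(a)>0$ for all $y>0$, the first-order condition $\frac{d}{dy}f_{Y}(y)=0$ is equivalent to the vanishing of the bracketed expression above, which is precisely the stated nonlinear equation. Any mode $y_{0}$ of the skew-QBS distribution must therefore satisfy it. The only technical point worth flagging is that this equation characterizes stationary points in general; to isolate genuine (local) maxima one would either check the sign of the second derivative at $y_0$ or invoke the boundary behavior $f_Y(0^{+})=0$ and $f_Y(\infty)=0$, which forces at least one interior maximum. I would note this briefly but not pursue it, since the proposition asserts only the necessary condition that a mode satisfies the displayed equation.
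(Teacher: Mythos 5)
Your computation is correct and is exactly the argument the paper has in mind: the paper simply states ``the proof is trivial and omitted,'' and the intended (and only natural) route is to differentiate $f_Y = 2\phi(a)\Phi(\lambda a)a'$, apply $\phi'(x) = -x\phi(x)$, factor out $2\phi(a)>0$, and set the remaining expression to zero. Your closing remark that the equation characterizes stationary points rather than maxima per se is a sensible caveat, consistent with how the proposition is phrased.
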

\begin{proof}
The proof is trivial and omitted.
\end{proof}

\subsubsection{Unimodality property of the \textrm{skew-QBS} distribution}

The following result provides sufficient conditions to guarantee unimodality of the \textrm{skew-QBS} distribution when the skewness parameter is positive.
\begin{theorem}[Unimodality]
	\label{Unimodality-bimodality}
Let $Y\sim\textrm{skew-QBS}(\alpha,\Xi_{q},\lambda)$ with $\lambda>0$.
If $f'''_{Y}(y;\alpha,\beta,\lambda)\leq 0$ for all $y\geq y_{\rm BS}$, where $y_{\rm BS}$ is the unique mode of the BS PDF satisfying the inequality
\begin{align}\label{hypothesis}
\lambda^2 y_{\rm BS}^3+ \Xi_{q}(2\alpha^2+\lambda^2) y_{\rm BS}^2 + \Xi_{q}^2(3-\lambda^2)y_{\rm BS}-\lambda^2\Xi_{q}^3>0,
\end{align}
then
the \textrm{skew-QBS} PDF is unimodal.
\end{theorem}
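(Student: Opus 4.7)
The plan is to split the half-line at the BS mode $y_{\rm BS}$ and prove that $f_Y$ is strictly increasing on $(0,y^*]$ and strictly decreasing on $[y^*,\infty)$ for a unique $y^*>y_{\rm BS}$, hence unimodal. The main tool is the weighted representation $f_Y(y)=2f_T(y)\,\Phi[\lambda a(y;\alpha,\Xi_q)]$ from the Remark after \eqref{eq:skewbs-pdf}, where $f_T$ is the BS PDF.

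First I would handle the interval $(0,y_{\rm BS}]$. There $f_T$ is strictly increasing because $y_{\rm BS}$ is its unique mode, and, since $\lambda>0$ and $a(\cdot;\alpha,\Xi_q)$ is strictly increasing on $(0,\infty)$, the weight $\Phi[\lambda a(y;\alpha,\Xi_q)]$ is also strictly increasing. The product $f_Y$ is therefore strictly increasing on $(0,y_{\rm BS}]$, so no local maximum can lie there. At $y=y_{\rm BS}$ direct differentiation yields
\[
f'_Y(y_{\rm BS})=2f'_T(y_{\rm BS})\,\Phi[\lambda a(y_{\rm BS})]+2\lambda f_T(y_{\rm BS})\,\phi[\lambda a(y_{\rm BS})]\,a'(y_{\rm BS}),
\]
whose first summand vanishes by the mode equation for $f_T$ and whose second is strictly positive. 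I expect the cubic inequality \eqref{hypothesis} to enter in a nearby sign computation: substituting the explicit $a$, $a'$ and the BS mode cubic $y_{\rm BS}^3+(1+\alpha^2)\Xi_q y_{\rm BS}^2+(3\alpha^2-1)\Xi_q^2 y_{\rm BS}-\Xi_q^3=0$ into, most likely, a sign check on $f''_Y(y_{\rm BS})$ reduces, up to positive multipliers, to the polynomial on the left of \eqref{hypothesis}, ensuring that the mode of $f_Y$ lies strictly to the right of $y_{\rm BS}$ and hence inside the region where the concavity hypothesis is active.

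Next I would handle $[y_{\rm BS},\infty)$. The assumption $f'''_Y\le 0$ makes $f'_Y$ concave there. Since $f_Y$ is a smooth PDF with exponentially decaying skew-BS tail, $f_Y(y)\to 0$ as $y\to\infty$, and therefore
\[
\int_{y_{\rm BS}}^{\infty}f'_Y(y)\,dy=-f_Y(y_{\rm BS})<0,
\]
which forces $f'_Y$ to be negative on a set of positive measure. A concave function on a half-line takes only three shapes: non-decreasing, non-increasing, or non-decreasing then non-increasing. The first is ruled out by $f'_Y(y_{\rm BS})>0$ combined with the negative integral, and in each of the other two the function is monotone decreasing past some point, so the transition from positive to negative occurs at a unique $y^*\in(y_{\rm BS},\infty)$. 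Combined with the previous step, $f_Y$ is strictly increasing on $(0,y^*)$ and strictly decreasing on $(y^*,\infty)$, proving unimodality.

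The main obstacle will be this concavity step: a concave function on a half-line can a priori cross zero twice, so uniqueness of $y^*$ does not follow from $f'''_Y\le 0$ alone but requires the integral identity $\int_{y_{\rm BS}}^\infty f'_Y=-f_Y(y_{\rm BS})$ together with the three-shape classification of concave functions on a half-line. A secondary subtlety is pinning down the precise role of \eqref{hypothesis}; its cubic form strongly suggests it is the algebraic residue of a positivity check at $y_{\rm BS}$ after clearing common positive factors, and it appears to serve to guarantee that $y^*$ is genuinely distinct from $y_{\rm BS}$, preventing the mode from being absorbed into the boundary of the concave region.
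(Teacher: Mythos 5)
Your argument is correct and its first half coincides with the paper's: both use the weighted form $f_Y=2f_T\,\Phi[\lambda a]$ to get $f_Y$ strictly increasing on $(0,y_{\rm BS}]$ and $f'_Y(y_{\rm BS})>0$ (the paper's item (a)). You diverge on $[y_{\rm BS},\infty)$. The paper's route is to compute $f''_Y$ as in \eqref{der2-f}, isolate $G(y)=a''-\lambda^2 a\,(a')^2$, and rewrite it as in \eqref{G-rewritten} so that condition \eqref{hypothesis} is consumed precisely in forcing $f''_Y(y_{\rm BS})<0$; the hypothesis $f'''_Y\le 0$ then propagates this negativity, making $f'_Y$ strictly decreasing on the ray, whence a unique zero. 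You instead use only concavity of $f'_Y$, the identity $\int_{y_{\rm BS}}^{\infty}f'_Y=-f_Y(y_{\rm BS})<0$, and the fact that the superlevel set $\{f'_Y\ge 0\}$ of a concave function is an interval; this is valid and buys a shorter argument that never invokes \eqref{hypothesis}. Your guess about the role of the cubic is therefore off: it is not there to separate $y^*$ from $y_{\rm BS}$ (that already follows from $f'_Y(y_{\rm BS})>0$), but to certify $f''_Y(y_{\rm BS})<0$. To make your write-up complete you should either carry out that computation or state explicitly that your argument renders \eqref{hypothesis} redundant.

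A further point that your integral identity makes transparent: a concave function on a half-line that is not non-decreasing must tend to $-\infty$, so under the blanket assumption $f'''_Y\le 0$ on all of $[y_{\rm BS},\infty)$ one gets $f'_Y\to-\infty$ and hence $f_Y\to-\infty$, contradicting $f_Y\ge 0$. The paper's own chain $f''_Y(y)\le f''_Y(y_{\rm BS})<0$ yields the same contradiction. The hypothesis as stated is thus never satisfied by an actual density, so both proofs establish the theorem only vacuously; the third-derivative condition should presumably be imposed only up to the first zero of $f'_Y$.
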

\begin{proof}
Let $T\sim \textrm{BS}(\alpha,\Xi_{q})$ be a random variable with BS PDF $f_{T}(y;\alpha,\Xi_{q})$, $y>0$. A simple computation shows that
\begin{align}
f'_{Y}(y;\alpha,\Xi_{q},\lambda)
=&
2\lambda a'(y;\alpha,\Xi_{q})
\phi\big[\lambda a(y;\alpha,\Xi_{q})\big] f_{T}(y;\alpha,\Xi_{q})
\nonumber
\\[0,2cm] 
&+
2 \Phi\big[\lambda a(y;\alpha,\Xi_{q})\big]
f'_{T}(y;\alpha,\Xi_{q});  \label{der1-f}
\\[0,2cm]
f''_{Y}(y;\alpha,\Xi_{q},\lambda) 
=&
2\lambda \phi\big[\lambda a(y;\alpha,\Xi_{q})\big]
\Big[
G(y;\alpha,\Xi_{q},\lambda) f_{T}(y;\alpha,\Xi_{q}) 
+ 
2 a'(y;\alpha,\Xi_{q})  f'_{T}(y;\alpha,\Xi_{q})
\Big] \nonumber
\\[0,2cm]
&+
2 \Phi\big[\lambda a(y;\alpha,\Xi_{q})\big]
f''_{T}(y;\alpha,\Xi_{q}); \label{der2-f}
\end{align}
where $G(y;\alpha,\Xi_{q},\lambda) $ is the function defined as
\begin{align*}
G(y;\alpha,\Xi_{q},\lambda) 
=
a''(y;\alpha,\Xi_{q}) - \lambda^2 a(y;\alpha,\Xi_{q}) \big[a'(y;\alpha,\Xi_{q})\big]^2.
\end{align*}
By using the expression of the first derivative of $a(y;\alpha,\Xi_{q})$ given in \eqref{at-rep}, and the following second and third derivatives of $a(y;\alpha,\Xi_{q})$:
\begin{align*}
\begin{array}{lllll}
\displaystyle
a''(y;\alpha,\Xi_{q})=-\dfrac{1}{4\alpha y^2}\left[\sqrt{t\over \Xi_{q}}+3\sqrt{\Xi_{q}\over y}\right];
&
\displaystyle
a'''(y;\alpha,\Xi_{q})=\dfrac{3}{8\alpha y^3}\left[\sqrt{y\over \Xi_{q}}+5\sqrt{\Xi_{q}\over y}\right];
\end{array}
\end{align*}
we can rewrite $G(y;\alpha,\Xi_{q},\lambda) $ as follows
\begin{align}\label{G-rewritten}
G(y;\alpha,\Xi_{q},\lambda) 
=
-{1\over 8\alpha^3\Xi_{q}^{3/2} y}\,
\Big[
\lambda^2 y^3+ \Xi_{q}(2\alpha^2+\lambda^2) y^2 + \Xi_{q}^2(3-\lambda^2)y-\lambda^2\Xi_{q}^3
\Big].
\end{align}

In what follows we highlight the following remarks:
\begin{itemize}
	\item[(a)] 
For $\lambda>0$, $a'(y;\alpha,\Xi_{q})>0$ and 
$f'_{T}(y_{\rm BS};\alpha,\Xi_{q})=0$, by \eqref{der1-f}, we have
\begin{align*}
f'_{Y}(y_{\rm BS};\alpha,\Xi_{q},\lambda)
=&
2\lambda a'(y_{\rm BS};\alpha,\Xi_{q})
\phi\big[\lambda a(y_{\rm BS};\alpha,\Xi_{q})\big] f_{T}(y_{\rm BS};\alpha,\Xi_{q})
>0.
\end{align*}
\item[(b)] 
Since the unique mode of the BS PDF, $y_{\rm BS}$,  satisfies the inequality \eqref{hypothesis}, from \eqref{G-rewritten} it follows that $G(y_{\rm BS};\alpha,\Xi_{q},\lambda)<0$. Hence, since $f'_{T}(y_{\rm BS};\alpha,\Xi_{q})=0$ and 
$f''_{T}(y_{\rm BS};\alpha,\Xi_{q})<0$, by \eqref{der2-f}, we get
\begin{align*}
f''_{Y}(y_{\rm BS};\alpha,\Xi_{q},\lambda) 
=&
2\lambda \phi\big[\lambda a(y_{\rm BS};\alpha,\Xi_{q})\big]
G(y_{\rm BS};\alpha,\Xi_{q},\lambda) f_{T}(y_{\rm BS};\alpha,\Xi_{q}) 
\\[0,2cm]
&+
2 \Phi\big[\lambda a(y_{\rm BS};\alpha,\Xi_{q})\big]
f''_{T}(y_{\rm BS};\alpha,\Xi_{q})<0.
\end{align*}
\end{itemize}
Note that statements in Items (a) and (b), in addition to the hypothesis
$f'''_{Y}(y;\alpha,\Xi_{q},\lambda)\leq 0$ for all $y\geq y_{\rm BS}$, guarantee that the function $f'_{Y}(y;\alpha,\Xi_{q},\lambda)$ is decreasing and concave down.
Hence, it follows that the function $f'_{Y}(y;\alpha,\Xi_{q},\lambda)$ has a unique root, denoted by $y_0$, in the interval $(y_{\rm BS}, +\infty)$. On the other hand, by unimodality of the BS distribution \citep{balakundu:19}, note that $f_{T}(y;\alpha,\Xi_{q})$ is increasing on $(0,y_{\rm BS})$, implying that the \textrm{skew-QBS} PDF  $f_{Y}(y;\alpha,\Xi_{q},\lambda)$ is also increasing on the interval $(0,y_{\rm BS})$. Therefore, we obtain that $y_0$ is the unique critical point for the function  $f_{Y}(y;\alpha,\Xi_{q},\lambda)$, $y>0$.
Finally, since $\lim_{y\to 0^+}f_{Y}(y;\alpha,\Xi_{q},\lambda)=0$ and $\lim_{y\to +\infty}f_{Y}(y;\alpha,\Xi_{q},\lambda)=0$, we conclude that $y_0$ is the unique mode of the \textrm{skew-QBS} distribution.
\end{proof}

\subsubsection{Monotonicity of HR of the \textrm{skew-QBS} distribution}
The survival and hazard functions, denoted by SF and HR of the \textrm{skew-QBS} distribution
are given by
$
S_Y(y;\alpha,\Xi_{q},\lambda) = 1-\int_{0}^{y}f_{Y}(\xi;\alpha,\Xi_{q},\lambda)\, {\rm d}\xi$, $y>0$,
and
\begin{align*}
H_Y(y;\alpha,\Xi_{q},\lambda)&={f_{Y}(y;\alpha,\Xi_{q},\lambda)\over S_Y(y;\alpha,\Xi_{q},\lambda)}, \quad y>0,
\end{align*}
respectively.

\begin{prop}[Monotonicity of HR]
\label{Monotonicity-1}
Under the conditions of Theorem \ref{Unimodality-bimodality},  the HR of the \textrm{skew-QBS} distribution has the following monotonic properties:
	\begin{itemize}
		\item[1)] It is increasing for all $y<y_0$;
		\item[2)] It is decreasing for all $y>y_*$, for some $y_*\geq y_0$;
	\end{itemize}
where $y_0$ is the mode of the \textrm{skew-QBS} density.
\end{prop}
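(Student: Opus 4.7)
The plan is to treat the two parts separately, starting from the identity $H'_Y(y) = f'_Y(y)/S_Y(y) + \bigl[H_Y(y)\bigr]^2$ obtained by direct differentiation of $H_Y=f_Y/S_Y$.

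For (1), Theorem \ref{Unimodality-bimodality} guarantees that $f_Y$ is unimodal with mode $y_0$, so $f'_Y(y)>0$ on $(0,y_0)$; both summands in the expression for $H'_Y(y)$ are then strictly positive, and the claim follows in one line.

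For (2), rewrite the same identity as $H'_Y(y) = H_Y(y)\bigl[H_Y(y) - \eta_Y(y)\bigr]$ with $\eta_Y(y):=-f'_Y(y)/f_Y(y)$. Since $H_Y>0$ everywhere and, for $y>y_0$, also $\eta_Y>0$, it suffices to exhibit $y_*\geq y_0$ past which $H_Y(y)\leq \eta_Y(y)$. My main tool is the integral identity
\begin{align*}
f_Y(y) \;=\; -\!\int_y^\infty f'_Y(t)\,{\rm d}t \;=\; \int_y^\infty \eta_Y(t)\, f_Y(t)\,{\rm d}t,
\end{align*}
which holds because $f_Y(\infty)=0$. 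If I can find $y^{**}\geq y_0$ beyond which $\eta_Y$ is non-increasing, then on $(y^{**},\infty)$ the monotonicity bound $\eta_Y(t)\leq \eta_Y(y)$ yields $f_Y(y)\leq \eta_Y(y)\,S_Y(y)$, i.e. $H_Y(y)\leq \eta_Y(y)$, and therefore $H'_Y(y)\leq 0$. Taking $y_*:=y^{**}$ finishes the argument.

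The principal obstacle is therefore to verify that $\eta_Y$ is eventually non-increasing. Logarithmic differentiation of $f_Y(y)=2\phi[a(y)]\Phi[\lambda a(y)]\,a'(y)$, together with $\phi'(x)=-x\phi(x)$, gives
\begin{align*}
\eta_Y(y) \;=\; a(y)\,a'(y) \;-\; \frac{a''(y)}{a'(y)} \;-\; \lambda\,a'(y)\,\Upsilon_\Phi\bigl[\lambda a(y)\bigr],
\end{align*}
where $\Upsilon_\Phi(x)=\phi(x)/\Phi(x)$ and $a,a',a''$ are as in \eqref{at-rep} and \eqref{sec-der-a}. The large-$y$ asymptotics $a(y)\,a'(y)\to \eta_{\alpha;\lambda}^2/(8\alpha^2\Xi_q)$, $a''(y)/a'(y)\sim -1/(2y)$, and $\lambda a'(y)\,\Upsilon_\Phi[\lambda a(y)]\to 0$ (since $\lambda a(y)\to \infty$ for $\lambda>0$) then show that $\eta_Y(y)$ converges from above to the positive limit $L:=\eta_{\alpha;\lambda}^2/(8\alpha^2\Xi_q)$, with $\eta'_Y(y)<0$ for all sufficiently large $y$. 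A cleaner alternative is to invoke Glaser's criterion \citep{gla:80}, which under the hypotheses of Theorem \ref{Unimodality-bimodality} converts the question on $H_Y$ into one on the shape of $\eta_Y$ and delivers the upside-down-bathtub profile for $H_Y$ at once.
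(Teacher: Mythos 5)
Your proposal is correct in substance but takes a genuinely different route from the paper's for part 2). For part 1) the two arguments coincide: the paper writes $H_Y$ as the product of the increasing functions $f_Y$ and $1/S_Y$ on $(0,y_0)$, while you split $H_Y'=f_Y'/S_Y+H_Y^2$ into two positive terms; these are the same observation. For part 2) both proofs ultimately rest on Glaser's criterion, i.e.\ on showing that $\eta_Y=-f_Y'/f_Y$ is eventually non\-increasing, but you reach that point differently. The paper tries to deduce it from an alleged eventual upward concavity of $f_Y$ together with a ``product of nonnegative decreasing functions'' argument; as written that step does not close ($f_Y''>0$ makes $f_Y'$ increasing, not decreasing, and $1/f_Y$ is increasing where $f_Y$ decreases, so the product is not obviously monotone). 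You instead compute $\eta_Y=a\,a'-a''/a'-\lambda a'\,\Upsilon_\Phi[\lambda a]$ in closed form and analyse its tail, which is a cleaner and more verifiable path; your integral identity $f_Y(y)=\int_y^\infty \eta_Y(t)f_Y(t)\,{\rm d}t$ is precisely the mechanism behind Glaser's lemma, so invoking \cite{gla:80} at that stage is equivalent to your direct comparison.

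The one step you assert without justification is that $\eta_Y'(y)<0$ for all sufficiently large $y$: the leading-order limits you quote ($a\,a'\to L$, $a''/a'\sim-1/(2y)$, the $\Upsilon_\Phi$ term vanishing) only give $\eta_Y\to L$ and do not by themselves exclude non-monotone convergence. The claim is nevertheless true and easy to close: setting $u^2=y\,\eta_{\alpha;\lambda}^2/(4\Xi_{q})$ one has exactly $a(y)a'(y)=L-2\Xi_{q}/(\alpha^2\eta_{\alpha;\lambda}^2 y^2)$ and $-a''(y)/a'(y)=\tfrac{1}{2y}(u^2+3)/(u^2+1)$, so that $\eta_Y(y)-L=\tfrac{1}{2y}+O(y^{-2})$ up to an exponentially small contribution from the $\Upsilon_\Phi$ term, whence $\eta_Y'(y)=-\tfrac{1}{2y^2}\big(1+o(1)\big)<0$ for large $y$. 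With that one line added, your proof is complete and, for part 2), more solid than the one in the paper.
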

\begin{proof}
By unimodality of the \textrm{skew-QBS} density (Theorem \ref{Unimodality-bimodality}), the \textrm{skew-QBS} density $f_Y$ increases on $(0,y_0)$. Then, since the survival function $S_Y(y;\alpha,\Xi_{q},\lambda)$, $y>0$, is a decreasing function it follows that $H_Y(y;\alpha,\Xi_{q},\lambda)$  is a product of two nonnegative increasing functions.
Therefore, it is a increasing function for all $y<y_0$, and the proof of first item follows.	
	
In order to prove the second item, by unimodality (Theorem \ref{Unimodality-bimodality}), we have the existence of a point $y_*\geq y_0$ such that the \textrm{skew-QBS} density is a concave upward function on the interval $J=(y_*,+\infty)$. That is, $f_Y''(y;\alpha,\Xi_{q},\lambda)>0$ for all $Y\in J$. Hence,  $f_Y'(y;\alpha,\Xi_{q},\lambda)$ is a decreasing function on $J$. Furthermore, by unimodality, the \textrm{skew-QBS}  density $f_Y$ decreases on this interval. Then the function $G_Y(y;\alpha,\Xi_{q},\lambda)$, defined as
	\begin{align*}
	W_Y(y;\alpha,\Xi_{q},\lambda)=-\, {f_Y'(y;\alpha,\Xi_{q},\lambda)\over f_Y(y;\alpha,\Xi_{q},\lambda)},
	\end{align*}
	decreases for all $y\in J$, because $W_Y(y;\alpha,\Xi_{q},\lambda)$, $y\in J$, is a product of nonnegative
	decreasing functions. Hence, \cite{gla:80} guarantees  that the HR $H_Y(y;\alpha,\Xi_{q},\lambda)$ is a decreasing function for all $y>y_*$. This completes the proof of Item 2).
	We thus complete the proof of proposition.	
\end{proof}

\subsubsection{Other properties of the \textrm{skew-QBS} distribution}

Let $\Theta_q=\sqrt{2}{\rm erf}^{-1}(q)$ be the $100q$-th quantile of the 
HN distribution, where ${\rm erf}^{-1}$ is the inverse error function.
The following proposition shows that the $\textrm{skew-QBS}(\alpha,\Xi_{q},\lambda)$ 
distribution approaches the truncated $\textrm{BS}(\alpha,\Theta_{q})$ distribution with support $[\Theta_q,\infty)$, when the 
skewness parameter $\lambda$ tends to infinity.
\begin{prop}
	If $Y\sim\textrm{skew-QBS}(\alpha,\Xi_{q},\lambda)$,
	the following properties hold:
	\begin{enumerate}
		\item 
		$
		\lim\limits_{\lambda\to\infty} f_{Y}(y;\alpha,\Xi_{q},\lambda)
		=
		2\phi\big[a(y;\alpha,\Theta_q)\big] a'(y;\alpha,\Theta_q)\, I_{[\Theta_q,\infty)}(y), \
		$
		for all $y\neq \Theta_q$;
		\item 
		$
		\lim\limits_{\lambda\to\infty} f_{Y}(\Theta_q;\alpha,\Xi_{q},\lambda)
		=
		\big(\alpha \Theta_q\sqrt{2\pi}\,\big)^{-1}.
		$
	\end{enumerate}
\end{prop}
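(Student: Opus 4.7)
The plan is to pass to the pointwise limit $\lambda\to\infty$ inside the PDF
\[
f_Y(y;\alpha,\Xi_q,\lambda)
= 2\phi[a(y;\alpha,\Xi_q)]\,\Phi[\lambda\,a(y;\alpha,\Xi_q)]\,a'(y;\alpha,\Xi_q),
\]
keeping track of the fact that $a$ and $a'$ depend on $\lambda$ implicitly through $\eta_{\alpha;\lambda}$, in addition to the explicit factor $\Phi[\lambda\,\cdot]$. Continuity handles the $\phi$ and $a'$ factors, so the whole argument hinges on the three-way behaviour of $\Phi[\lambda\,a]$ according to the sign of the limiting value of $a(y;\alpha,\Xi_q)$.

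First I would establish $\eta_{\alpha;\lambda}\to\eta_\infty:=\alpha\Theta_q+\sqrt{(\alpha\Theta_q)^2+4}$. Using the stochastic representation $X=\delta U+\sqrt{1-\delta^2}\,Z$ with $\delta=\lambda/\sqrt{1+\lambda^2}\to 1$, the SN variable $X$ converges almost surely in this common $(U,Z)$ coupling (and in distribution) to $U\sim\textrm{HN}(0,1)$, so the SN quantile satisfies $Q_X(q;\lambda)\to\Theta_q$, giving the limit of $\eta_{\alpha;\lambda}$ by continuity. For each fixed $y>0$ this yields pointwise limits $a_\infty(y)$ and $a'_\infty(y)$ of $a(y;\alpha,\Xi_q)$ and $a'(y;\alpha,\Xi_q)$, which under the paper's identification of the limiting scale coincide with the expressions $a(y;\alpha,\Theta_q)$ and $a'(y;\alpha,\Theta_q)$ appearing in the statement.

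For Item 1, I would split on the sign of $a_\infty(y)$, which vanishes at a single point (identified with $\Theta_q$). For $y>\Theta_q$, $a_\infty(y)>0$ so $\lambda\,a(y;\alpha,\Xi_q)\to+\infty$ and $\Phi[\lambda\,a]\to 1$, producing the truncated BS-type density $2\phi[a(y;\alpha,\Theta_q)]\,a'(y;\alpha,\Theta_q)$; for $y<\Theta_q$, $\Phi[\lambda\,a]\to 0$ and the density collapses to $0$. Multiplying by $I_{[\Theta_q,\infty)}(y)$ recovers the stated limit for $y\neq\Theta_q$. For Item 2, at the boundary $a_\infty(\Theta_q)=0$ and $a'_\infty(\Theta_q)=1/(\alpha\Theta_q)$, so once $\Phi[\lambda\,a(\Theta_q;\alpha,\Xi_q)]\to\Phi(0)=\tfrac12$ is shown, the limit assembles as $2\phi(0)\cdot\tfrac12\cdot 1/(\alpha\Theta_q)=(\alpha\Theta_q\sqrt{2\pi})^{-1}$.

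The main obstacle will be showing $\lambda\,a(\Theta_q;\alpha,\Xi_q)\to 0$, which requires a rate estimate for $Q_X(q;\lambda)-\Theta_q$. Using $F_X(x;\lambda)=2\int_{-\infty}^x\phi(t)\Phi(\lambda t)\,dt$ and splitting the integral at $t=0$, the contributions from $(-\infty,0)$ and $(0,\Theta_q)$ combine to give the compact identity
\[
F_X(\Theta_q;\lambda)-q \;=\; 2\int_{\Theta_q}^{\infty}\phi(s)\,\Phi(-\lambda s)\,ds,
\]
which by Mills' ratio $\Phi(-\lambda s)\leq\phi(\lambda s)/(\lambda s)$ for $s>0$ is of order $\lambda^{-3}\exp(-\lambda^2\Theta_q^2/2)$, i.e.\ super-polynomially small. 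A first-order Taylor inversion of $F_X$ at $\Theta_q$ then yields $Q_X(q;\lambda)-\Theta_q$ of the same order, and propagating through the explicit form of $a(\Theta_q;\alpha,\Xi_q)$ gives $\lambda\,a(\Theta_q;\alpha,\Xi_q)\to 0$, closing the boundary case.
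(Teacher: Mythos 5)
Your argument is correct and follows the same top-level route as the paper -- pass to the pointwise limit in each factor of the PDF -- but it supplies the two steps that actually carry the content and that the paper's proof omits entirely. The paper disposes of the proposition in one line, "immediate upon observing that $\lim_{\lambda\to\infty}\Xi_q=\Theta_q$, $\lim_{\lambda\to\infty}a(y;\alpha,\Xi_q)=a(y;\alpha,\Theta_q)$ and $\lim_{\lambda\to\infty}a'(y;\alpha,\Xi_q)=a'(y;\alpha,\Theta_q)$," and never mentions the factor $\Phi[\lambda a]$, which is where the indicator $I_{[\Theta_q,\infty)}$ and the value $\tfrac12$ at the boundary actually come from. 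You make both of these explicit: the sign trichotomy of the limiting $a$ for Item~1, and, for Item~2, the genuinely delicate point that $\lambda\,a(\Theta_q;\alpha,\Xi_q)$ is an $\infty\cdot 0$ form. Your identity $F_X(\Theta_q;\lambda)-q=2\int_{\Theta_q}^{\infty}\phi(s)\Phi(-\lambda s)\,ds$ checks out (split at $0$, reflect, and use $\int_0^{\Theta_q}2\phi=q$), and the Mills-ratio bound gives a quantile error that is $o(1/\lambda)$, which is exactly what is needed; this ingredient has no counterpart in the paper. The one caveat, which you correctly flag but which originates in the paper's statement rather than in your proof, is the identification of the zero of the limiting $a$ with $\Theta_q$: with $\Xi_q$ held fixed as $\lambda\to\infty$, the limiting $a$ vanishes at $4\Xi_q/\eta_\infty^2$ with $\eta_\infty=\alpha\Theta_q+\sqrt{(\alpha\Theta_q)^2+4}$, and the paper's claim $\lim_{\lambda\to\infty}\Xi_q=\Theta_q$ is what forces this point to be $\Theta_q$; granting that convention, your boundary computation $2\phi(0)\cdot\tfrac12\cdot(\alpha\Theta_q)^{-1}=(\alpha\Theta_q\sqrt{2\pi})^{-1}$ is exactly right.
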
 
\begin{proof}
The proof is immediate upon observing that $\lim_{\lambda\to\infty} \Xi_{q}=\Theta_q$, 
$\lim_{\lambda\to\infty} a(y;\alpha,\Xi_{q})=a(y;\alpha,\Theta_q)$ and
$\lim_{\lambda\to\infty} a'(y;\alpha,\Xi_{q})=a'(y;\alpha,\Theta_q)$.
\end{proof}
\begin{prop}\label{prop:proper}
Let $Y\sim\textrm{skew-QBS}(\alpha,\Xi_{q},\lambda)$. Then:
\begin{enumerate}
\item $cT\sim\textrm{skew-QBS}(\alpha,c\,\Xi_{q},\lambda)$, with $c>0$;
\item $T^{-1}\sim   \textrm{skew-QBS}(\alpha,\Xi_{q}^{-1},-\lambda)$.
\end{enumerate}
\end{prop}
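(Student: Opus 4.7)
The plan is to derive both statements from the stochastic representation~\eqref{eq:sr}. Writing
\begin{equation*}
Y=\frac{\beta}{4}\bigl[\alpha X+\sqrt{(\alpha X)^2+4}\bigr]^2,\qquad X\sim\textrm{SN}(\lambda),
\end{equation*}
with classical scale $\beta=4\Xi_q/\eta_{\alpha;\lambda}^2$ inherited from the reparameterization \eqref{eq:qf}, each item reduces to a short manipulation in the classical skew-BS parameters, followed by a translation back to the quantile parameterization. I would invoke the reparameterization one last time at the end to convert the resulting scale back into a $q$-quantile using \eqref{eq:qf}.

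For item~1, scaling the stochastic representation by $c>0$ yields $cY=\frac{c\beta}{4}\bigl[\alpha X+\sqrt{(\alpha X)^2+4}\bigr]^2$, which is the representation of a $\textrm{skew-BS}(\alpha,c\beta,\lambda)$ random variable. Since $\eta_{\alpha;\lambda}$ depends only on $\alpha,\lambda$ (and the fixed level $q$), formula~\eqref{eq:qf} gives a $q$-quantile equal to $c\beta\,\eta_{\alpha;\lambda}^2/4=c\Xi_q$, which is exactly the claim. One can alternatively verify this by a direct Jacobian computation on the PDF \eqref{eq:skewbs-pdf-rep}, using that the ratio $y/\Xi_q$ appearing under the square roots in \eqref{at-rep} is invariant under $(y,\Xi_q)\mapsto(cy,c\Xi_q)$; this immediately gives $a(y/c;\alpha,\Xi_q)=a(y;\alpha,c\Xi_q)$ and $a'(y/c;\alpha,\Xi_q)=c\,a'(y;\alpha,c\Xi_q)$, and the change-of-variables factor $1/c$ is then absorbed by the extra $c$ in $a'$.

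For item~2, the algebraic identity $\bigl(\alpha X+\sqrt{(\alpha X)^2+4}\bigr)\bigl(-\alpha X+\sqrt{(\alpha X)^2+4}\bigr)=4$ lets me invert the squared bracket and rewrite $Y^{-1}=\frac{1}{4\beta}\bigl[\alpha(-X)+\sqrt{(\alpha(-X))^2+4}\bigr]^2$. Since $-X\sim\textrm{SN}(-\lambda)$ (an elementary consequence of the definition of the SN distribution), the right-hand side is the stochastic representation of $\textrm{skew-BS}(\alpha,\beta^{-1},-\lambda)$. The part I expect to require the most care is the translation back to the quantile form: one has to match $\beta^{-1}=\eta_{\alpha;\lambda}^2/(4\Xi_q)$ with the target quantile $\Xi_q^{-1}$ through the $\eta$ now attached to the reflected skewness $-\lambda$. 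This reduces to verifying the cross-reflection identity relating $\eta_{\alpha;\lambda}$ and $\eta_{\alpha;-\lambda}$ at level $q$, which in turn rests on combining the quantile symmetry $Q_X(q;-\lambda)=-Q_X(1-q;\lambda)$ (inherited from $-X\sim\textrm{SN}(-\lambda)$) with the same elementary factorization $(v+\sqrt{v^2+4})(-v+\sqrt{v^2+4})=4$ used above; this is the only genuinely nontrivial step, and it is where I expect the bookkeeping to be most delicate.
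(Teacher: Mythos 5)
Your treatment of item 1 is correct and complete; it supplies the details behind the paper's one-line ``transformation of variables'' proof, and both of your routes (rescaling the stochastic representation \eqref{eq:sr}, or checking directly that $a(y/c;\alpha,\Xi_{q})=a(y;\alpha,c\,\Xi_{q})$ and $a'(y/c;\alpha,\Xi_{q})=c\,a'(y;\alpha,c\,\Xi_{q})$ so that the Jacobian $1/c$ cancels) are sound.

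Item 2 has a genuine gap at precisely the step you deferred. You correctly reduce the claim to the ``cross-reflection identity'' $\eta_{\alpha;\lambda}\,\eta_{\alpha;-\lambda}=4$ at the \emph{same} level $q$: indeed $T^{-1}\sim\textrm{skew-BS}(\alpha,\beta^{-1},-\lambda)$ with $\beta^{-1}=\eta_{\alpha;\lambda}^{2}/(4\Xi_{q})$, so by \eqref{eq:qf} its $q$-quantile is $\eta_{\alpha;\lambda}^{2}\,\eta_{\alpha;-\lambda}^{2}/(16\,\Xi_{q})$, which equals $\Xi_{q}^{-1}$ if and only if $\eta_{\alpha;\lambda}\eta_{\alpha;-\lambda}=4$. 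But the two ingredients you propose to combine do not produce this. Writing $\eta_{\alpha;\lambda}(q)=\alpha Q_X(q;\lambda)+\sqrt{[\alpha Q_X(q;\lambda)]^{2}+4}$ to make the level explicit, the factorization $(v+\sqrt{v^{2}+4})(-v+\sqrt{v^{2}+4})=4$ together with your symmetry $Q_X(q;-\lambda)=-Q_X(1-q;\lambda)$ yields
\begin{align*}
\eta_{\alpha;\lambda}(q)\,\eta_{\alpha;-\lambda}(1-q)=4,
\end{align*}
with levels $q$ and $1-q$, not both $q$. Forcing $\eta_{\alpha;\lambda}(q)\,\eta_{\alpha;-\lambda}(q)=4$ would require $Q_X(q;\lambda)=Q_X(1-q;\lambda)$, which holds only at $q=1/2$. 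Equivalently, since $P(T^{-1}\leq z)=1-F_T(1/z)$, the $q$-quantile of $T^{-1}$ is $\Xi_{1-q}^{-1}$, not $\Xi_{q}^{-1}$. What your argument actually establishes is $T^{-1}\sim\textrm{skew-QBS}(\alpha,\Xi_{1-q}^{-1},-\lambda)$, i.e.\ that $\Xi_{q}^{-1}$ is the $(1-q)$-quantile of $T^{-1}$; the statement as written is recovered only for $q=1/2$ or after flipping the quantile level to $1-q$. The paper's proof offers no detail that resolves this, so you should make the level change explicit rather than leaving the ``delicate bookkeeping'' open --- as stated, it cannot be closed.
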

\begin{proof} The proof follows immediately through transformation of variables.	
\end{proof}

\section{Appendix}

The elements of the Hessian matrix are given by
\begin{align*}
{\partial^2{\ell}(\bm{\theta})\over \partial\gamma  \partial\vartheta}
=&
-
\sum_{t=1}^{n}
\dfrac{1}{f^2_{Y|\Xi_{q}}(y_{t};\alpha,\Xi_{q,t},\lambda)} \,
\dfrac{\partial f_{Y|\Xi_{q}}(y_{t};\alpha,\Xi_{q,t},\lambda)}{\partial \gamma} \,
\dfrac{\partial f_{Y|\Xi_{q}}(y_{t};\alpha,\Xi_{q,t},\lambda)}{\partial \vartheta}
\\[0,2cm]
&+
\sum_{t=1}^{n}
\dfrac{1}{f_{Y|\Xi_{q}}(y_{t};\alpha,\Xi_{q,t},\lambda)} \,
\dfrac{\partial^2 f_{Y|\Xi_{q}}(y_{t};\alpha,\Xi_{q,t},\lambda)}{\partial \gamma \partial \vartheta},
\end{align*}
for each $\gamma,\vartheta\in\{\alpha, \varpi, \rho_{1},\ldots,\rho_{r}, \sigma_{1}, \ldots, \sigma_{s}, \lambda\}$,
where
{	\scalefont{0.85}
	\begin{align*}
	&{\partial^2 f_{Y|\Xi_{q}}(y_{t};\alpha,\Xi_{q,t},\lambda)\over \partial\gamma\partial\vartheta^*}
	=
2 \Phi\big[\lambda a(y_t;\alpha,\Xi_{q,t},\lambda)\big] \,	
{\partial^2 f_{T|\Xi_{q}}(y_{t};\alpha,\Xi_{q,t},\lambda)\over \partial\gamma \partial\vartheta^*}
	\\[0,2cm]
&
+2\lambda \phi\big[\lambda a(y_t;\alpha,\Xi_{q,t},\lambda)\big]
\bigg\{	
{\partial a(y_t;\alpha,\Xi_{q,t},\lambda)\over \partial\gamma} \,
{\partial f_{T|\Xi_{q}}(y_{t};\alpha,\Xi_{q,t},\lambda)\over \partial\vartheta^*}
+
{\partial a(y_t;\alpha,\Xi_{q,t},\lambda)\over \partial\vartheta^*} \,
{\partial f_{T|\Xi_{q}}(y_{t};\alpha,\Xi_{q,t},\lambda)\over \partial\gamma}
	\\[0,2cm]
&+
\bigg[	
{\partial^2 a(y_t;\alpha,\Xi_{q,t},\lambda)\over \partial\gamma\partial\vartheta^*}
-
\lambda^2 a(y_t;\alpha,\Xi_{q,t},\lambda) \,
{\partial a(y_t;\alpha,\Xi_{q,t},\lambda)\over \partial\gamma} \,
{\partial a(y_t;\alpha,\Xi_{q,t},\lambda)\over \partial\vartheta^*}
\bigg]
f_{T|\Xi_{q}}(y_{t};\alpha,\Xi_{q,t},\lambda)
	\bigg\}
\end{align*}
}
and 
{	\scalefont{0.85}
	\begin{align*}
	{\partial^2 f_{Y|\Xi_{q}}(y_{t};\alpha,\Xi_{q,t},\lambda)\over \partial\gamma\partial\lambda}
	=
{\partial^2 f_{Y|\Xi_{q}}(y_{t};\alpha,\Xi_{q,t},\lambda)\over \partial\gamma\partial\vartheta^*}	
+ 2 \phi\big[\lambda a(y_t;\alpha,\Xi_{q,t},\lambda)\big] 
f_{T|\Xi_{q}}(y_{t};\alpha,\Xi_{q,t},\lambda)\,
{\partial a(y_t;\alpha,\Xi_{q,t},\lambda)\over \partial\gamma},	
\end{align*}
}
for each $\vartheta^*\in\{\alpha, \varpi, \rho_{1},\ldots,\rho_{r}, \sigma_{1}, \ldots, \sigma_{s}\}$,
with
{	\scalefont{0.85}
\begin{align*}
	&{\partial^2 f_{T|\Xi_{q}}(y_{t};\alpha,\Xi_{q,t},\lambda)\over \partial\gamma\partial\vartheta}
	=
	-a(y_t;\alpha,\Xi_{q,t},\lambda) \,
	{\partial a(y_t;\alpha,\Xi_{q,t},\lambda)\over \partial\vartheta} \, 
{\partial f_{T|\Xi_{q}}(y_{t};\alpha,\Xi_{q,t},\lambda)\over \partial\vartheta}	
	\\[0,2cm]
& 
+	
	\phi\big[a(y_t;\alpha,\Xi_{q,t},\lambda)\big]
\bigg\{
{\partial^2 a'(y_t;\alpha,\Xi_{q,t},\lambda)\over \partial\gamma\partial\vartheta}
-
a(y_t;\alpha,\Xi_{q,t},\lambda)
a'(y_t;\alpha,\Xi_{q,t},\lambda)\,
{\partial^2 a(y_t;\alpha,\Xi_{q,t},\lambda)\over \partial\gamma\partial\vartheta}
	\\[0,2cm]
& 
-
\bigg[
{\partial a(y_t;\alpha,\Xi_{q,t},\lambda)\over \partial\vartheta} \, 
a'(y_t;\alpha,\Xi_{q,t},\lambda)
+
a(y_t;\alpha,\Xi_{q,t},\lambda)\,
{\partial a'(y_t;\alpha,\Xi_{q,t},\lambda)\over \partial\vartheta}
\bigg]
\bigg\},
	\end{align*}
}
for each $\gamma, \vartheta\in\{\alpha, \varpi, \rho_{1},\ldots,\rho_{r}, \sigma_{1}, \ldots, \sigma_{s},\lambda\}$.
A simple computation shows that the above second order partial derivatives of functions $a(y_t;\alpha,\Xi_{q,t},\lambda)$ and $a'(y_t;\alpha,\Xi_{q,t},\lambda)$  are expressed as
\begin{align*}
\textstyle
{\partial^2 a(y_t;\alpha,\Xi_{q,t},\lambda)\over\partial\alpha^2}
&= \textstyle
{a(y_t;\alpha,\Xi_{q,t},\lambda)\over \alpha^2}
-
{1\over \alpha}\, {\partial a(y_t;\alpha,\Xi_{q,t},\lambda)\over \partial\alpha}
+
2 a'(y_t;\alpha,\Xi_{q,t},\lambda) \eta_{\alpha; \lambda} \,
{\partial^2 \eta_{\alpha; \lambda}\over \partial\alpha^2}
\\[0,2cm]
&\quad +
\textstyle
2\big[
{\partial a'(y_t;\alpha,\Xi_{q,t},\lambda)\over \partial\alpha} \,
\eta_{\alpha; \lambda}
+
a'(y_t;\alpha,\Xi_{q,t},\lambda)\, 
{\partial \eta_{\alpha; \lambda}\over \partial\alpha}
\big]
{\partial \eta_{\alpha; \lambda}\over \partial\alpha},
\\[0,2cm]
\textstyle
{\partial^2 a(y_t;\alpha,\Xi_{q,t},\lambda)\over\partial\alpha\partial\gamma^*}
&= \textstyle
- 
{1\over\alpha}\, 
{\partial a(y_t;\alpha,\Xi_{q,t},\lambda)\over \partial\gamma^*}
+
2\, {\partial a'(y_t;\alpha,\Xi_{q,t},\lambda)\over \partial\gamma^*} \, {\partial \eta_{\alpha; \lambda}\over \partial\alpha}\,
\eta_{\alpha; \lambda},
\quad 
\gamma^*\in\{\varpi, \rho_{k},\sigma_{l}\} ,
\end{align*}
\begin{align*}
\textstyle
{\partial^2 a(y_t;\alpha,\Xi_{q,t},\lambda)\over\partial\alpha\partial\lambda}
&= 
\textstyle
-
{1\over \alpha}\, {\partial a(y_t;\alpha,\Xi_{q,t},\lambda)\over \partial\lambda}
+
2 a'(y_t;\alpha,\Xi_{q,t},\lambda) \eta_{\alpha; \lambda} \,
{\partial^2 \eta_{\alpha; \lambda}\over \partial\alpha\partial\lambda}
\\[0,2cm]
&\quad +
\textstyle
2\big[
{\partial a'(y_t;\alpha,\Xi_{q,t},\lambda)\over \partial\lambda} \,
\eta_{\alpha; \lambda}
+
a'(y_t;\alpha,\Xi_{q,t},\lambda)\, 
{\partial \eta_{\alpha; \lambda}\over \partial\lambda}
\big]
{\partial \eta_{\alpha; \lambda}\over \partial\alpha}
,
\\[0,2cm]
\textstyle
{\partial^2 a(y_t;\alpha,\Xi_{q,t},\lambda)\over\partial\gamma^*\partial\vartheta^*}
&= 
\textstyle
\big[
{2\over \Xi_{q,t}}
-
{\partial a'(y_t;\alpha,\Xi_{q,t},\lambda)\over \partial\vartheta^*}
\big]
{1\over \Xi^2_{q,t}}\,
{\partial \Xi_{q,t}\over \partial\gamma^*}
-
{a'(y_t;\alpha,\Xi_{q,t},\lambda)\over \Xi_{q,t}}\, 
{\partial^2 \Xi_{q,t}\over \partial\gamma^*\partial\vartheta^*},
\quad 
\gamma^*,\vartheta^*\in\{\varpi, \rho_{k}, \sigma_{l}\} ,
\\[0,2cm]
\textstyle
{\partial^2 a(y_t;\alpha,\Xi_{q,t},\lambda)\over\partial\gamma^*\partial\lambda}
&= \textstyle
-{1\over \Xi_{q,t}^2}\,
{\partial a'(y_t;\alpha,\Xi_{q,t},\lambda) \over\partial \lambda}\,
{\partial \Xi_{q,t}\over \partial\gamma^*},
\\[0,2cm]
\textstyle
{\partial^2 a(y_t;\alpha,\Xi_{q,t},\lambda)\over\partial\lambda^2}
&= \textstyle
+
2 a'(y_t;\alpha,\Xi_{q,t},\lambda) \eta_{\alpha; \lambda} \,
{\partial^2 \eta_{\alpha; \lambda}\over \partial\lambda^2}
\\[0,2cm]
&\quad +
\textstyle
2\big[
{\partial a'(y_t;\alpha,\Xi_{q,t},\lambda)\over \partial\lambda} \,
\eta_{\alpha; \lambda}
+
a'(y_t;\alpha,\Xi_{q,t},\lambda)\, 
{\partial \eta_{\alpha; \lambda}\over \partial\lambda}
\big]
{\partial \eta_{\alpha; \lambda}\over \partial\lambda},
\end{align*}
where $\eta_{\alpha; \lambda}$ is as in \eqref{definition-eta},
$k=1,\ldots,r$ and $l=1,\ldots,s$,  and
\begin{align*}
\textstyle
{\partial^2 a'(y_t;\alpha,\Xi_{q,t},\lambda)\over\partial\alpha^2}
&= \textstyle
{a'(y_t;\alpha,\Xi_{q,t},\lambda)\over \alpha^2}
-
{1\over \alpha}\, {\partial a'(y_t;\alpha,\Xi_{q,t},\lambda)\over \partial\alpha}
+
2 a''(y_t;\alpha,\Xi_{q,t},\lambda) \eta_{\alpha; \lambda} \,
{\partial^2 \eta_{\alpha; \lambda}\over \partial\alpha^2}
\\[0,2cm]
&\quad +
\textstyle
2\big[
{\partial a''(y_t;\alpha,\Xi_{q,t},\lambda)\over \partial\alpha} \,
\eta_{\alpha; \lambda}
+
a''(y_t;\alpha,\Xi_{q,t},\lambda)\, 
{\partial \eta_{\alpha; \lambda}\over \partial\alpha}
\big]
{\partial \eta_{\alpha; \lambda}\over \partial\alpha},
\\[0,2cm]
\textstyle
{\partial^2 a'(y_t;\alpha,\Xi_{q,t},\lambda)\over\partial\alpha\partial\gamma^*}
&= \textstyle
- 
{1\over\alpha}\, 
{\partial a'(y_t;\alpha,\Xi_{q,t},\lambda)\over \partial\gamma^*}
+
2\, {\partial a''(y_t;\alpha,\Xi_{q,t},\lambda)\over \partial\gamma^*} \, {\partial \eta_{\alpha; \lambda}\over \partial\alpha}\,
\eta_{\alpha; \lambda},
\quad 
\gamma^*\in\{\varpi, \rho_{k},\sigma_{l}\} ,
\\[0,2cm]
\textstyle
\textstyle
{\partial^2 a'(y_t;\alpha,\Xi_{q,t},\lambda)\over\partial\alpha\partial\lambda}
&= 
\textstyle
-
{1\over \alpha}\, {\partial a'(y_t;\alpha,\Xi_{q,t},\lambda)\over \partial\lambda}
+
2 a''(y_t;\alpha,\Xi_{q,t},\lambda) \eta_{\alpha; \lambda} \,
{\partial^2 \eta_{\alpha; \lambda}\over \partial\alpha\partial\lambda}
\\[0,2cm]
&\quad +
\textstyle
2\big[
{\partial a''(y_t;\alpha,\Xi_{q,t},\lambda)\over \partial\lambda} \,
\eta_{\alpha; \lambda}
+
a''(y_t;\alpha,\Xi_{q,t},\lambda)\, 
{\partial \eta_{\alpha; \lambda}\over \partial\lambda}
\big]
{\partial \eta_{\alpha; \lambda}\over \partial\alpha}
,
\\[0,2cm]
\textstyle
{\partial^2 a'(y_t;\alpha,\Xi_{q,t},\lambda)\over\partial\gamma^*\partial\vartheta^*}
&= 
\textstyle
\big[
{2\over \Xi_{q,t}}
-
{\partial a''(y_t;\alpha,\Xi_{q,t},\lambda)\over \partial\vartheta^*}
\big]
{1\over \Xi^2_{q,t}}\,
{\partial \Xi_{q,t}\over \partial\gamma^*}
-
{a''(y_t;\alpha,\Xi_{q,t},\lambda)\over \Xi_{q,t}}\, 
{\partial^2 \Xi_{q,t}\over \partial\gamma^*\partial\vartheta^*},
\quad 
\gamma^*,\vartheta^*\in\{\varpi, \rho_{k}, \sigma_{l}\} ,
\\[0,2cm]
\textstyle
{\partial^2 a'(y_t;\alpha,\Xi_{q,t},\lambda)\over\partial\gamma^*\partial\lambda}
&= \textstyle
-{1\over \Xi_{q,t}^2}\,
{\partial a''(y_t;\alpha,\Xi_{q,t},\lambda) \over\partial \lambda}\,
{\partial \Xi_{q,t}\over \partial\gamma^*},
\\[0,2cm]
\textstyle
{\partial^2 a'(y_t;\alpha,\Xi_{q,t},\lambda)\over\partial\lambda^2}
&= \textstyle
+
2 a''(y_t;\alpha,\Xi_{q,t},\lambda) \eta_{\alpha; \lambda} \,
{\partial^2 \eta_{\alpha; \lambda}\over \partial\lambda^2}
\\[0,2cm]
&\quad +
\textstyle
2\big[
{\partial a''(y_t;\alpha,\Xi_{q,t},\lambda)\over \partial\lambda} \,
\eta_{\alpha; \lambda}
+
a''(y_t;\alpha,\Xi_{q,t},\lambda)\, 
{\partial \eta_{\alpha; \lambda}\over \partial\lambda}
\big]
{\partial \eta_{\alpha; \lambda}\over \partial\lambda},
\end{align*}
for each $k=1,\ldots,r$ and $l=1,\ldots,s$, with
\begin{align*}
\textstyle
{\partial a''(y_t;\alpha,\Xi_{q,t},\lambda)\over\partial\alpha}
&= \textstyle
-{a''(y_t;\alpha,\Xi_{q,t},\lambda)\over 2\alpha y_t}
+
2 a'''(y_t;\alpha,\Xi_{q,t},\lambda)
\eta_{\alpha; \lambda}\,
{\partial \eta_{\alpha; \lambda}\over \partial\alpha} ,
\\[0,2cm]
\textstyle
{\partial a''(y_t;\alpha,\Xi_{q,t},\lambda)\over\partial\gamma^*}
&= \textstyle
-
a'''(y_t;\alpha,\Xi_{q,t},\lambda) \,
{1\over \Xi_{q,t}^2} \,
{\partial \Xi_{q,t}\over \partial \gamma^*},
\quad 
\gamma^*\in\{\varpi, \rho_{1},\ldots,\rho_{r}, \sigma_{1}, \ldots, \sigma_{s}\} ,
\\[0,2cm]
\textstyle
{\partial a''(y_t;\alpha,\Xi_{q,t},\lambda)\over\partial\lambda}
&= \textstyle
2a'''(y_t;\alpha,\Xi_{q,t},\lambda) \eta_{\alpha; \lambda}
\,
{\partial \eta_{\alpha; \lambda}\over \partial\lambda},
\end{align*}
and 
\begin{align*}
a'''(y_t;\alpha,\Xi_{q,t},\lambda) 
=
{3\over 8\alpha y^3_t}
\left[\sqrt{\frac{y_t \eta_{\alpha; \lambda}^2}{4\Xi_{q,t}}}+5\sqrt{\frac{4\Xi_{q,t}}{y_t\eta_{\alpha; \lambda}^2}}\right].
\end{align*}

\end{document}